\documentclass{article}

\usepackage{graphicx}
\usepackage{subcaption}
\usepackage{booktabs} % for professional tables
\usepackage{multicol}
\usepackage{multirow}
\usepackage{bm}
\usepackage{hyperref}

\usepackage[preprint]{icml2026}

\usepackage{amsmath}
\usepackage{amssymb}
\usepackage{mathtools}
\usepackage{amsthm}
\usepackage{algorithm}
\usepackage{algorithmic}

\usepackage[capitalize,noabbrev]{cleveref}

\theoremstyle{plain}
\newtheorem{theorem}{Theorem}[section]
\newtheorem{proposition}[theorem]{Proposition}
\newtheorem{lemma}[theorem]{Lemma}

\theoremstyle{definition}

\theoremstyle{remark}

\usepackage{graphicx}
\usepackage{booktabs}
\usepackage{amssymb}
\usepackage{multirow}
\usepackage{makecell}
\usepackage{pifont}
\usepackage[svgnames]{xcolor}
\usepackage{enumitem}

\usepackage[textsize=tiny]{todonotes}

\begin{document}

\twocolumn[
  \icmltitle{Objective Shaping with Hard Negatives: Windowed Partial AUC Optimization for
  RL-based LLM Recommenders}

  % It is OKAY to include author information, even for blind submissions: the
  % style file will automatically remove it for you unless you've provided
  % the [accepted] option to the icml2026 package.

  % List of affiliations: The first argument should be a (short) identifier you
  % will use later to specify author affiliations Academic affiliations
  % should list Department, University, City, Region, Country Industry
  % affiliations should list Company, City, Region, Country

  % You can specify symbols, otherwise they are numbered in order. Ideally, you
  % should not use this facility. Affiliations will be numbered in order of
  % appearance and this is the preferred way.
  % \icmlsetsymbol{equal}{*}

  \begin{icmlauthorlist}
    \icmlauthor{Wentao Shi\textsuperscript{\textdagger}}{ustc}
    \icmlauthor{Qifan Wang}{meta}
    \icmlauthor{Chen Chen}{meta}
    \icmlauthor{Fei Liu}{meta}
    \icmlauthor{Dongfang Liu}{rit}
    \icmlauthor{Xu Liu}{meta}
    \icmlauthor{Wanli Ma}{meta}
    \icmlauthor{Junfeng Pan}{meta}
    \icmlauthor{Linhong Zhu}{meta}
    \icmlauthor{Fuli Feng}{ustc}
  \end{icmlauthorlist}

  \icmlaffiliation{ustc}{University of Science and Technology of China (USTC), Hefei, China}
  \icmlaffiliation{meta}{Meta AI, Menlo Park, USA}
  \icmlaffiliation{rit}{Rochester Institute of Technology, Rochester, USA}

  \icmlcorrespondingauthor{Qifan Wang}{wqfcr@fb.com}

  \icmlcorrespondingauthor{Fuli Feng}{fulifeng93@gmail.com}

  % You may provide any keywords that you find helpful for describing your
  % paper; these are used to populate the "keywords" metadata in the PDF but
  % will not be shown in the document
  \icmlkeywords{Negative Sampling, Partial AUC, LLM-based Recommenders}

  \vskip 0.3in
]

% this must go after the closing bracket ] following \twocolumn[ ...

% This command actually creates the footnote in the first column listing the
% affiliations and the copyright notice. The command takes one argument, which
% is text to display at the start of the footnote. The \icmlEqualContribution
% command is standard text for equal contribution. Remove it (just {}) if you
% do not need this facility.

% Use ONE of the following lines. DO NOT remove the command.
% If you have no special notice, KEEP empty braces:
% Add an author note for the \textsuperscript{\textdagger} symbol (first author)
\printAffiliationsAndNotice{\textsuperscript{\textdagger}Work done during an internship at Meta.}
% Or, if applicable, use the standard equal contribution text:
% \printAffiliationsAndNotice{\icmlEqualContribution}

\begin{abstract}
  Reinforcement learning (RL) effectively optimizes Large Language Model (LLM)-based recommenders by contrasting positive and negative items. Empirically, training with beam-search negatives consistently outperforms random negatives, yet the mechanism is not well understood. We address this gap by analyzing the induced optimization objective and show that: (i) Under binary reward feedback, optimizing LLM recommenders with Group Relative Policy Optimization (GRPO) is theoretically equivalent to maximizing the Area Under the ROC Curve (AUC), which is often misaligned with Top-$K$ recommendation; and (ii) Replacing random negatives with beam-search negatives reshapes the objective toward partial AUC, improving alignment with Top-$K$ metrics. Motivated by this perspective, we introduce \textbf{W}indowed \textbf{P}artial \textbf{AUC} (\textbf{WPAUC}), which constrains the false positive rate (FPR) to a window [$\alpha,\alpha+d$] to more directly align with Top-$K$ metrics. We further propose an efficient \textbf{T}hreshold-\textbf{A}djusted \textbf{Win}dowed reweighting (\textbf{TAWin}) RL method for its optimization, enabling explicit control over the targeted Top-$K$ performance. Experiments on four real-world datasets validate the theory and deliver consistent state-of-the-art performance.
\end{abstract}

\section{Introduction}

% windowed partial AUC

% first paragraph
% what is RS, LLM in RS，
% windowed partial AUC

% cut preliminary 
% ReRe 多解释一下，OPAUC WPAUC
% table 2 3 优化空间
% Qwen-2.5 background 
% 实验部分多解释一些

Recommendation systems are pivotal in mitigating information overload on large-scale platforms by delivering personalized Top-$K$ recommendations~\citep{rendle2012bpr, rendle2014improving}. With recent breakthroughs in LLMs~\citep{liu2024deepseek, openai2024gpt4technicalreport}, LLM-based recommenders have emerged as a promising paradigm for Top-$K$ recommendation~\citep{DBLP:conf/kdd/ChenSSH17, DBLP:conf/recsys/BaoZZWF023}. In this setting, RL serves as an effective optimization framework, refining user relative preference modeling by contrasting sampled positive items against negative items~\citep{DBLP:journals/corr/abs-2510-12211}. Empirical evidence, including our own reproductions (Figure~\ref{fig:AUC_comparison}(a)), confirms that RL training using negative items derived from beam search consistently outperforms random sampling~\citep{zhou2025onerec, kong2025minionerec}.

\begin{figure}[t]
    \centering
    \begin{subfigure}[t]{0.49\linewidth}
        \centering
        \includegraphics[width=\linewidth]{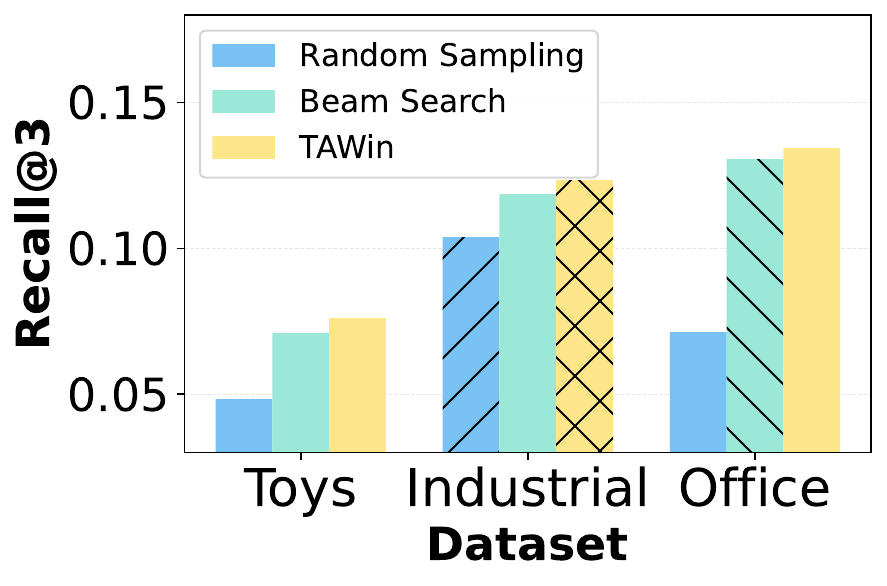}
        \caption{Performance Comparison}
        %\label{fig:sampling_comparison}
    \end{subfigure}
    \hfill
    \begin{subfigure}[t]{0.49\linewidth}
        \centering
        \includegraphics[width=\linewidth]{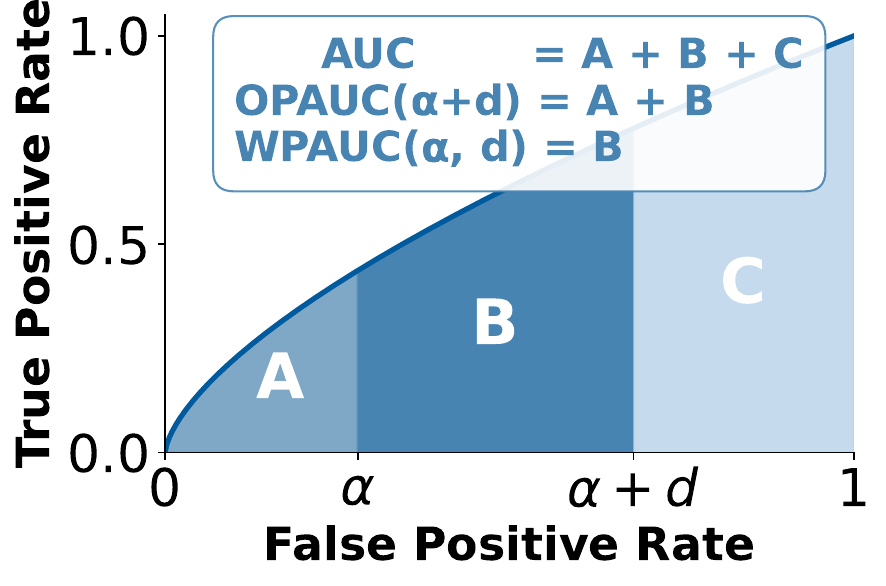}
        \caption{Partial AUC}
        %\label{fig:WPAUC}
    \end{subfigure}
    \caption{(a) Performance comparison of Random Sampling, Beam Search (\citep{DBLP:journals/corr/abs-2510-12211}), and our approach (\textbf{TAWin}) on three datasets. (b) Illustration of partial AUC on the ROC curve: AUC covers regions $\mathbb{A}+\mathbb{B}+\mathbb{C}$, OPAUC($\alpha+d$) covers $\mathbb{A}+\mathbb{B}$, and WPAUC$(\alpha,d)$ isolates the windowed region $\mathbb{B}$ over the FPR interval $[\alpha, \alpha+d]$.}
    \label{fig:AUC_comparison}
    
\end{figure}

\begin{figure*}[ht]
    \centering
    \includegraphics[width=0.9\linewidth]{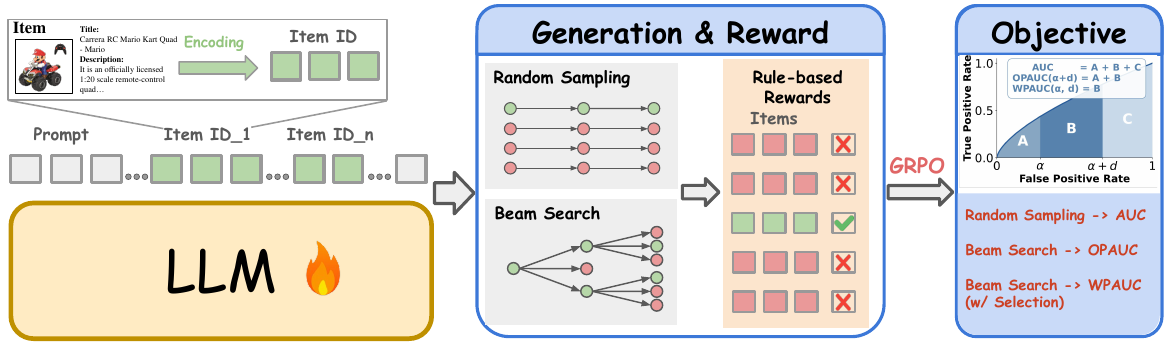}
    \caption{\textbf{Overview of GRPO training for LLM-based recommenders.} The LLM recommender generates candidate item IDs from a history prompt via random sampling or beam search, receives rule-based rewards, and is updated with GRPO. Random sampling induces an AUC objective, while beam-search negatives shift the objective toward One-way Partial AUC (OPAUC). Windowed selection over beam-search samples further yields Windowed Partial AUC (WPAUC).}
    \label{fig:main}
\end{figure*}

% \begin{figure}[ht]
%     \centering
%     \includegraphics[width=0.95\linewidth]{figure/WPAUC.pdf}
%     \caption{(a) OPAUC evaluates the ROC area restricted to the interval FPR $\in [0, d]$. (b) WPAUC generalizes OPAUC by evaluating the area over a shifted interval FPR $\in [\alpha, \alpha+d]$, allowing flexible focus on the low-FPR region.}
%     \label{fig:WPAUC}
% \end{figure}
% Beam search tends to generate negative items that the model assigns high probabilities to, commonly referred to as hard negative items~\citep{vijayakumar2016diverse}

However, the theoretical mechanisms underlying the efficacy of beam search remain underexplored. Existing work generally attributes the advantages of negative items from beam search to their ability to provide more informative training signals~\citep{DBLP:journals/corr/abs-2510-12211, zhou2025onerec}. However, such explanations lack the granularity required for principled algorithmic design. Drawing inspiration from~\citet{DBLP:conf/www/ShiCFZWG023}, we provide a rigorous theoretical investigation of the optimization objective for RL training, yielding two primary insights:
\begin{itemize}[leftmargin=*]
    \item \textbf{Optimizing LLM-based recommenders with GRPO algorithms~\citep{guo2025deepseek} under binary reward is theoretically equivalent to maximizing the AUC metric}. However, AUC is poorly aligned with the Top-$K$ objective in the recommendation task.
    \item \textbf{Incorporating hard negative items generated by beam search into GRPO shifts the objective toward One-way Partial AUC (OPAUC)}, which constrains the false positive rate (FPR) range, as shown in Figure~\ref{fig:AUC_comparison}(b). \textbf{This yields improved alignment with Top-$K$ metrics, following the theoretical analysis in~\citet{DBLP:conf/www/ShiCFZWG023}.}
\end{itemize}

Building on these insights, we provide a theoretical explanation for the role of hard negatives in RL-based LLM recommenders: they implicitly steer the optimization objective toward Top-$K$ metrics. However, the correlation between OPAUC and Top-$K$ metrics with specific $K$ remains insufficient, precluding precise control over the optimization process. To address this limitation, we propose a new metric, \textbf{W}indowed \textbf{P}artial \textbf{AUC} (\textbf{WPAUC}), which constrains the FPR within a windowed interval $[\alpha, \alpha+d]$, as shown in Figure~\ref{fig:AUC_comparison}(b). 
%We demonstrate that WPAUC exhibits a stronger correlation with Top-$K$ metrics and, under certain parameter settings, is equivalent to Recall@$K$.
We formally demonstrate that WPAUC exhibits a superior correlation with Top-$K$ metrics and, under specific parameterization, is equivalent to Recall@$K$. Furthermore, we propose an efficient \textbf{T}hreshold-\textbf{A}djusted \textbf{Win}dowed reweight
method (\textbf{TAWin}), which emphasizes negative samples within the target window while smoothly attenuating those outside it, thereby preserving the Top-$K$ inductive bias during training. Extensive experiments on four real-world datasets validate our theoretical findings and demonstrate state-of-the-art performance. Our primary contributions are as follows:
\begin{itemize}[leftmargin=*]
    % \item We theoretically show that beam search guide the RL optimization objective of LLM recommenders toward Top-$K$ metrics, aligning more closely with the goal of the recommendation task.
    % \item We introduce the novel WPAUC($\alpha,d$) metric, establish its strong alignment with Top-$K$ metrics, and develop an efficient threshold-adjusted windowed reweight RL method \textbf{TAWin} for its optimization.
    % \item Extensive experiments on four real-world datasets confirm our theoretical insights and demonstrate that our method consistently attains state-of-the-art results.
    \item \textbf{Theoretical Analysis}: We demonstrate that beam search implicitly steers the RL optimization of LLM recommenders toward Top-$K$ metrics, aligning the objective more closely with recommendation goals.
    \item \textbf{Methodological Innovation}: We propose
    WPAUC($\alpha,d$) metric to establish strong alignment with Top-$K$ metrics and develop \textbf{TAWin}, an efficient threshold-adjusted windowed reweighting method, for its optimization.
    \item \textbf{Empirical Validation}: Through extensive testing on four real-world datasets, we confirm our theoretical insights and demonstrate that our approach consistently achieves state-of-the-art results.
\end{itemize}

\section{Preliminary}

\subsection{Task Formulation}

We cast LLM-based recommendations as conditional language generation. For each user $u\in\mathcal{U}$ with interaction history $H_u=\{i_1,\ldots,i_n\}$ (items verbalized as text, e.g., titles), the recommender $\pi_\theta$ is trained to generate the \textbf{ground-truth target item $i_t$} under a constrained item vocabulary $\mathcal{I}$~\citep{bao2024decoding, DBLP:journals/corr/abs-2510-12211}. Formally, the decoding objective is:
\begin{equation}
    Y = \mathcal{G}(\pi_{\theta}, H_u, \mathcal{I}), \quad
i = \phi(Y),
\end{equation}
where $\mathcal{G}$ denotes a constrained generation strategy (e.g., random sampling or beam search), and $\phi(\cdot)$ maps the token sequence $Y=(y_1,\ldots,y_{|Y|})$ to a valid item $i$.

\textbf{Constrained Random Sampling} sequentially samples tokens from $\pi_\theta(\cdot\mid y_{<j},H_u)$ while enforcing the valid-token constraint $y_j\in\mathcal{I}$, yielding a stochastic candidate that respects the item vocabulary.

\textbf{Constrained Beam Search} instead performs deterministic decoding by maintaining a beam of the $B$ highest-probability partial hypotheses under the same constraint and returning the $B$ completed sequences (and their mapped items) from the final beam. More implementation details are deferred to Appendix~\ref{appendix:generation_strategy}.

\subsection{Group Relative Policy Optimization (GRPO)}
Group Relative Preference Optimization (GRPO) is a representative reinforcement learning framework for training LLM-based recommenders. At each iteration, it generates a set of $G$ candidate items $\{i_m\}_{m=1}^{G}$ conditioned on each user's history and evaluates them with a verifiable rule-based reward, ensuring alignment with the target item $i_t$:
\begin{equation}
R_{\text{rule}}(i_m, i_t) =
\begin{cases}
1, & \text{if } i_m = i_t, \\
0, & \text{otherwise}.
\end{cases}
\label{eq:rule-reward}
\end{equation}
GRPO normalizes the rewards within each group to compute token-level advantages. Specifically, for the $j$-th token in item $i_m$, the normalized advantage is computed as:
$\hat{A}_{m,j}
=
\frac{r_m - \mathrm{mean}(\{r_m\}_{m=1}^{G})}
     {\mathrm{std}(\{r_m\}_{m=1}^{G})}$
where $r_m$ denotes the rule-based reward $R_{\text{rule}}(i_m, i_t)$, and $\hat{A}_{m,j}$ represents the advantage of the $j$-th token.

The GRPO objective is then optimized via a clipped surrogate function (ignoring the KL regularization term):
\begin{equation}
\begin{aligned}
& \mathcal{J}(\theta)
=
\mathbb{E}_{H_u, \{i_m\} \sim \pi_{old}(\cdot \mid H_u)}
\Bigg[
\frac{1}{G} \sum_{m=1}^{G}
\frac{1}{|Y_m|}
\sum_{j=1}^{|Y_m|}
\Bigg\{ \\
&\min\Bigg[
\rho_{m,j,u} 
\hat{A}_{m,j},
\mathrm{clip}\!\Bigg(
\rho_{m,j,u}, 1-\epsilon,1+\epsilon
\Bigg) \hat{A}_{m,j}
\Bigg] 
\Bigg\}
\Bigg].
\end{aligned}
\label{eq:grpo-obj}
\end{equation}
Here, $\rho_{m,j,u} = \frac{\pi_{\theta}(y_{m,j} \mid y_{m,<j}, H_u)}{\pi_{\text{old}}(y_{m,j} \mid y_{m,<j}, H_u)}$,  $\pi_{\text{old}}$ is the frozen rollout policy used to sample candidates at each epoch. $\epsilon>0$ clips per-step updates.

\subsection{AUC and Partial AUC}
In personalized recommendation, AUC quantifies a model’s ranking capability by assessing the probability that observed positive items are ranked above unobserved negatives for each user. Formally, it can be expressed in a pairwise ranking perspective as:
\begin{equation}
    \text{AUC}_u = \Pr_{i^+ \sim \mathcal{I}_u^+,  i^- \sim \mathcal{I}_u^-} \big[f_{u,i^+} > f_{u,i^-}\big],
\end{equation}
where $\mathcal{I}_u^+$ and $\mathcal{I}_u^-$ are the sets of positive and negative items for user $u$, respectively, and $f_{u,i}$ represents the predicted score of item $i$ for user $u$. In LLM-based recommenders, the score is instantiated as the generation probability assigned to the item tokens:
\begin{equation}
f_{u,i}
=
\prod_{j=1}^{|Y|}
\pi_{\theta}\!\left(y_j \mid y_{<j}, H_u\right),
\quad i = \phi(Y).
\label{eq:fui}
\end{equation}

One-way Partial AUC (OPAUC($\alpha+d$)) restricts evaluation to the low-FPR region (FPR $\le \alpha+d$), thereby emphasizing the ranking quality between positive items and the highest-scored negative items. Formally, the threshold $\eta_{\alpha+d}$ is defined as:
\begin{equation}
\Pr_{i^- \sim \mathcal{I}^-_u}\left[f_{u,i^-} \ge \eta_{\alpha+d}\right] = \alpha+d,
\label{eq:eta_d}
\end{equation}
and only negative items with prediction scores in
$[\eta_{\alpha+d}, 1]$ are included. \citet{DBLP:conf/www/ShiCFZWG023} demonstrate that OPAUC has better alignment to Top-$K$ recommendation performance.

\section{Methodology}
In this section, we first clarify the role of negative sampling in RL training for LLM-based recommenders (\S~\ref{method:negative-sampling}), then propose a new optimization metric WPAUC (\S~\ref{sec:wpauc}), and finally present a smoothed reweight method TAWin (\S~\ref{method;thretopk}).

\subsection{Negative Sampling Meets GRPO}\label{method:negative-sampling}

We formalize how the negative sampling distribution in GRPO implicitly specifies its optimization target. Under a binary reward, GRPO can be reformulated as a pairwise objective equivalent to AUC maximization. Moreover, replacing random sampling with beam search shifts negatives toward the high-score tail of $\mathcal{I}_u^{-}$, inducing an OPAUC objective that better aligns with Top-$K$ ranking metrics, following the analysis of~\citet{DBLP:conf/www/ShiCFZWG023}.

\paragraph{Analysis of GRPO.}
We first establish a formal connection between GRPO and pairwise ranking objectives.

\begin{lemma}[Pairwise Formulation of GRPO]
\label{lem:grpo-to-auc}
Under binary reward defined in Eq.~\eqref{eq:rule-reward} and constrained random sampling strategy, the GRPO objective $\mathcal{J}(\theta)$ in Eq.~\eqref{eq:grpo-obj} can be rewritten as
\begin{equation}
\begin{aligned}
\label{eq:pairwise-form}
\mathcal{J}(\theta)
& =
\mathbb{E}_{H_u}\sqrt{p(H_u)(1-p(H_u))} \cdot \\
&\quad \mathbb{E}_{Y^+ \sim\pi_{old}^+,Y^-\sim\pi_{old}^-}
\Big[
s_\theta^{+}(Y^+,H_u)-s_\theta^{-}(Y^-,H_u)
\Big],
\end{aligned}
\end{equation}
where $p(H_u)\triangleq \Pr_{Y\sim \pi_{old}(\cdot\mid H_u)}\big[r(Y\mid H_u)=1\big]$,
$\pi_{old}^+(\cdot\mid H_u)\triangleq \pi_{old}(\cdot\mid H_u)\mid r(Y\mid H_u)=1$,
$\pi_{old}^-(\cdot\mid H_u)\triangleq \pi_{old}(\cdot\mid H_u)\mid r(Y\mid H_u)=0$, and
\begin{equation}\label{eq:splus-sminus}
s_\theta^{\pm}(Y^\pm,H_u)
\triangleq
\frac{1}{|Y|}\sum_{j=1}^{|Y|}
\begin{cases}
\min(\rho_{j,u},\,1+\epsilon), & (\,+\,),\\
\max(\rho_{j,u},\,1-\epsilon), & (\,-\,).
\end{cases}
\end{equation}
where $\rho_{j,u}=\frac{\pi_\theta(y_j\mid y_{<j}, H_u)}{\pi_{\text{old}}(y_j\mid y_{<j}, H_u)}$.
\end{lemma}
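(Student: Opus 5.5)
The plan is to work conditionally on a fixed history $H_u$ and to pass to the population (large-$G$) limit of the group statistics. In that limit the group mean of the binary rewards equals $p(H_u)$ and the group standard deviation equals $\sqrt{p(H_u)(1-p(H_u))}$. Here I take the Bernoulli standard deviation, which is the natural population version of $\mathrm{std}(\{r_m\})$. The normalized advantage then takes only two values:
\[
\hat A^{+}=\frac{1-p}{\sqrt{p(1-p)}}=\sqrt{\frac{1-p}{p}},
\qquad
\hat A^{-}=\frac{-p}{\sqrt{p(1-p)}}=-\sqrt{\frac{p}{1-p}}.
\]
The first is the value on candidates with $r=1$, the second on candidates with $r=0$, and both are constant across tokens of the same item. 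The degenerate case $p\in\{0,1\}$, where all advantages vanish, will be treated separately. There the prefactor $\sqrt{p(1-p)}$ in Eq.~\eqref{eq:pairwise-form} is also $0$, so both sides agree.

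The second step simplifies the clipped surrogate token by token, using the sign of the advantage. For $A>0$ we have $\min[\rho A,\mathrm{clip}(\rho,1-\epsilon,1+\epsilon)A]=A\min(\rho,1+\epsilon)$. For $A<0$ the same expression equals $A\max(\rho,1-\epsilon)$. Both identities follow from a quick case check on $\rho$ relative to the clip interval. Averaging over the tokens of one item gives exactly $\hat A^{+}s^{+}_\theta(Y,H_u)$ for positive items and $\hat A^{-}s^{-}_\theta(Y,H_u)$ for negative items, with $s^\pm_\theta$ as in Eq.~\eqref{eq:splus-sminus}.

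The third step decomposes $\frac1G\sum_m$ into an expectation over $Y\sim\pi_{old}(\cdot\mid H_u)$, by linearity and exchangeability of the i.i.d.\ samples under constrained random sampling. This expectation is split by the reward event into $p\,\mathbb E_{\pi^+_{old}}[\hat A^+ s^+]+(1-p)\,\mathbb E_{\pi^-_{old}}[\hat A^- s^-]$. Substituting the advantages gives
\[
p\sqrt{\tfrac{1-p}{p}}=\sqrt{p(1-p)}
\quad\text{and}\quad
(1-p)\sqrt{\tfrac{p}{1-p}}=\sqrt{p(1-p)},
\]
so the per-history objective equals $\sqrt{p(1-p)}\,(\mathbb E_{\pi^+_{old}}s^+_\theta-\mathbb E_{\pi^-_{old}}s^-_\theta)$. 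Because $Y^+$ and $Y^-$ enter separately, this can be written as a single expectation over the product distribution $\pi^+_{old}\times\pi^-_{old}$, which is the pairwise form. Taking $\mathbb E_{H_u}$ completes the argument.

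The main obstacle I expect is the finite-$G$ issue. The empirical mean and std are random and correlated with the samples, so the identity holds exactly only in the population limit, or conditionally on the number of positives in the group. I would handle this by stating that the group statistics are replaced by their population counterparts $p(H_u)$ and $\sqrt{p(1-p)}$, as in the standard GRPO-as-contrastive analyses. Alternatively, one can condition on $k$ positives among $G$ samples: then $\hat A^\pm$ depend on $k/G$, and a concentration argument as $G\to\infty$ recovers the formula. Independence of the samples under random sampling is exactly where the hypothesis on the sampling strategy is used; it fails for beam search.
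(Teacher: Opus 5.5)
Your proposal is correct and follows essentially the same route as the paper's proof. Both replace the group mean and std by $p(H_u)$ and $\sqrt{p(1-p)}$ to get the two-valued advantage, simplify the clipped surrogate by the sign of the advantage into $\min(\rho,1+\epsilon)$ and $\max(\rho,1-\epsilon)$, and split the i.i.d.\ rollout expectation by reward event so that both coefficients collapse to $\sqrt{p(1-p)}$; your explicit treatment of $p\in\{0,1\}$, and making the population replacement before invoking exchangeability (the paper does it the other way round), are if anything slightly cleaner.
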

The proof can be found in~\citet{li2025disco}, with a complementary derivation in the Appendix~\ref{appendix:proof_of_lemma1}. 

\paragraph{From sequence-level sampling to item-level pairwise ranking.}

Lemma~\ref{lem:grpo-to-auc} shows that GRPO raises $s_\theta^{+}(Y^{+},H_u)$ while lowering $s_\theta^{-}(Y^{-},H_u)$. Mapping sequences to items via $i^{+}=\phi(Y^{+})$ and $i^-=\phi(Y^{-})$, and noting that $s_\theta^{+}$ and $s_\theta^{-}$ are non-decreasing in the corresponding item scores $f_{u,i^{+}}$ and $f_{u,i^-}$, GRPO therefore induces a pairwise push of the form
\begin{equation}
\label{eq:auc-induced-item}
\Pr_{i^{+}\sim \mathcal{I}_u^{+},\; i^-\sim \mathbb{P}_{\text{neg}}(\cdot\mid u)}
\big[f_{u,i^{+}} > f_{u,i^-}\big],
\end{equation}

where the choice of decoding strategy (e.g., constrained random sampling vs.\ constrained beam search) changes $\mathbb{P}_{\text{neg}}(\cdot\mid u)$ and thus changes the \emph{implicit} pairwise ranking objective optimized by GRPO.

\paragraph{From AUC to OPAUC via Beam Search.}
Compared to AUC, OPAUC($\alpha,d$) restricts attention to top-($\alpha+d$) quantile set of negative items under the policy score $f_{u,i}$:
\begin{equation}
\mathcal{Q}_u^{-}(\alpha+d)\triangleq\{ i^-\in\mathcal{I}_u^- : f_{u,i^-}\ge \eta_{\alpha+d} \},
\label{eq:Q_u}
\end{equation}
where $\eta_{\alpha+d}$ is defined in Eq.~\eqref{eq:eta_d}. This suggests that if the negative sampler is biased toward the upper tail of $f_{u,i^-}$, GRPO will implicitly optimize a \emph{partial} AUC.

\begin{lemma}[Constrained Beam Search and Hard-Negative Selection]
\label{lem:beam-hard-neg}
In the limit as $B \to \infty$, constrained beam search becomes exactly
equivalent to sampling from the top $\eta_{\alpha+d}$ fraction of negatives
ranked by $f_{u,i}$.
\end{lemma}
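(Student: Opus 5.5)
The statement needs a precise reading before it can be proved. I would interpret it as follows. Let $\mathcal{I}$ be the finite constrained item vocabulary, with $|\mathcal{I}_u^-| = N^-$. Beam search with width $B$ returns a set $\mathcal{B}_B(u)$ of completed items. Let $\mathbb{P}_{\text{neg}}^{B}(\cdot\mid u)$ be the uniform distribution over the negatives in $\mathcal{B}_B(u)$, which is the negative distribution entering Eq.~\eqref{eq:auc-induced-item}. The claim to prove is:
\begin{itemize}
\item as the beam grows, with the FPR budget tied to the beam through $\alpha+d \approx |\mathcal{B}_B(u)\cap\mathcal{I}_u^-|/N^-$, the set $\mathcal{B}_B(u)\cap\mathcal{I}_u^-$ coincides with $\mathcal{Q}_u^-(\alpha+d)$ from Eq.~\eqref{eq:Q_u};
\item consequently $\mathbb{P}_{\text{neg}}^{B}$ is the uniform distribution on the top-$(\alpha+d)$ quantile of negatives.
\end{itemize}
The plan is to prove these two facts in turn, then substitute into Eq.~\eqref{eq:auc-induced-item} to recover the OPAUC pairwise objective. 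This matches the normalization of OPAUC over $\mathcal{Q}_u^-$.

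\textbf{Step 1: prefix monotonicity.} By Eq.~\eqref{eq:fui}, every partial-hypothesis score $\prod_{j\le \ell}\pi_\theta(y_j\mid y_{<j},H_u)$ is an upper bound on the score $f_{u,i}$ of every completion of that prefix, because each factor is at most $1$. This is the standard fact that makes beam search an approximate top-$B$ search.

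\textbf{Step 2: the large-beam limit.} Once $B$ is at least the number of valid prefixes at every depth, no hypothesis is ever pruned. In that regime $\mathcal{B}_B(u)$ is the full set of valid completions ordered by $f_{u,i}$, and returning the top $B$ is exactly top-$B$ selection by $f_{u,i}$. This holds because the item trie has finite depth and finite branching, so such a $B$ exists.

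\textbf{Step 3: top-$B$ selection equals the quantile set.} Removing the (at most a few) positives from a top-$B$ list by $f_{u,i}$ leaves a top-$B'$ list of negatives. By the definition of $\eta_{\alpha+d}$ in Eq.~\eqref{eq:eta_d}, a top-$B'$ list of negatives is exactly $\{i^-: f_{u,i^-}\ge \eta_{\alpha+d}\}$ with $\alpha+d = B'/N^-$. Ties at the threshold are broken by a fixed rule, consistent with how the quantile is defined.

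\textbf{Step 4: substitution.} Plugging the resulting uniform distribution on $\mathcal{Q}_u^-(\alpha+d)$ into Eq.~\eqref{eq:auc-induced-item} gives $\Pr[f_{u,i^+}>f_{u,i^-}\mid i^-\in\mathcal{Q}_u^-(\alpha+d)]$. This is the normalized OPAUC$(\alpha+d)$.

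\textbf{Main obstacle.} The difficulty is making ``$B\to\infty$'' meaningful. Literally, $B$ saturates at the item count, and then $\alpha+d\to1$, so the result would only recover full AUC. I would therefore state the limit jointly: $N^-\to\infty$ with the ratio $B/N^-\to\alpha+d$ held fixed.

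The key lemma then becomes that beam search with width $B$ recovers the exact top-$B$ items whenever the beam is wide relative to the branching factor at each depth. I would try to prove it by showing that, among the top-$B$ items, every depth-$\ell$ prefix has score at least the $B$-th best item score, by Step 1. Hence at most $B$ prefixes whose completions lie in the top-$B$ set compete at each depth. The missing ingredient is a condition that no prefix outside this set outranks them. A sufficient assumption, if a clean argument does not exist, is the standard one that partial scores are order-consistent with their best completion. Under that assumption the argument goes through, and I would state it explicitly as the condition under which ``exactly equivalent'' holds.
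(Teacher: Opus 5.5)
Your Step~2 already contains the whole argument the paper uses: once $B\ge N_j$ at every depth $j$, nothing is pruned, every valid completion enters $\mathcal{F}$, and completions are ranked exactly by $f_{u,i}$. The gap is in how you resolve your ``main obstacle.'' You tie the number of returned sequences to the beam width and correctly observe that $\alpha+d\to 1$. You then switch to a joint limit, which needs a finite-width beam to return the exact top-$B$ items. That key lemma is false in general, and Step~1 does not help with it. Prefix scores upper-bound their completions, so a prefix whose large mass is spread over many poor items can evict a prefix whose smaller mass sits on one good item.

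For a concrete failure, let the root split into $a$ with mass $0.9$ and $b$ with mass $0.1$. Let $b$ have a single completion of score $0.1$. Let the mass under $a$ split uniformly along a binary subtree of depth $D\ge 4$, so each item under $a$ scores $0.9/2^{D}<0.1$. With $B=2$ no node ever has more than $B$ children. Yet at depth $2$ the two prefixes under $a$ (mass $0.45$ each) displace $b$, so the best item is never returned. Your patch, that prefix scores are order-consistent with their best completion, does make the argument go through. But it essentially assumes beam search is exact, so you would only prove a conditional version of the lemma.

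The missing idea is to decouple the search width from the output size. The paper's proof sends the width $B\to\infty$, so the search becomes exhaustive, exactly as in your Step~2. Meanwhile the decoder returns a fixed number $K$ of top completed sequences. These are then exactly the top-$K$ items by $f_{u,i}$. Setting $\alpha+d\coloneqq K/n_u^-$ and assuming no ties identifies them with $\mathcal{Q}_u^-(\alpha+d)$, which stays a proper quantile set rather than collapsing to all negatives. Under this reading, Step~1, the joint limit $N^-\to\infty$, and the order-consistency assumption are all unnecessary. Your Step~3 remark about removing the positive from the returned list is a small improvement over the paper, which silently treats all $K$ returned items as negatives.
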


The proof can be found in Appendix~\ref{appendix:proof_lemma_hard_neg}. Inspired by the lemma, under constrained beam search with finite beam width $B$, the induced negative
sampler concentrates on $\mathcal{Q}_u^{-}(\alpha+d)$ and thus approximates hard negative sampling. We can directly obtain the following proposition.

\begin{proposition}[GRPO with beam negatives optimizes OPAUC]
\label{prop:grpo-opauc}
Consider GRPO training where the positive item is the ground-truth target $i_t\in\mathcal{I}_u^+$ and negatives are generated by constrained beam search.
Under Lemma~\ref{lem:beam-hard-neg}, the induced pairwise objective in Lemma~\ref{lem:grpo-to-auc} becomes
\[
\Pr_{i^+\sim \mathcal{I}_u^+,\,i^-\sim \mathbb{P}_{\text{hard}}(\cdot\mid u)}
\!\big[f_{u,i^+} > f_{u,i^-}\big],
\]
which is equivalent to optimizing $\mathrm{OPAUC}(\alpha+d)$.
\end{proposition}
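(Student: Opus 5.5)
The argument composes the two lemmas and then identifies the resulting conditional probability with the OPAUC definition. I assume, as the proposition does, that the hard-negative sampler $\mathbb{P}_{\text{hard}}(\cdot\mid u)$ delivered by Lemma~\ref{lem:beam-hard-neg} is the uniform (or score-proportional, see below) distribution supported on the quantile set $\mathcal{Q}_u^{-}(\alpha+d)$ of Eq.~\eqref{eq:Q_u}.

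\textbf{Step 1: invoke the pairwise form with the beam sampler.}
Lemma~\ref{lem:grpo-to-auc} is stated for constrained random sampling. Its derivation only uses two facts:
\begin{itemize}
\item the binary reward splits the group into a positive part and a negative part;
\item group normalization yields the factor $\sqrt{p(1-p)}$ multiplying the difference of conditional expectations.
\end{itemize}
Neither fact depends on how the negatives are drawn. So I replace $\pi_{old}^-$ by the beam-induced negative law, with the positive fixed to $i_t$. Since $s_\theta^{\pm}$ is monotone in $f_{u,i}$, the same reasoning that led to Eq.~\eqref{eq:auc-induced-item} gives the surrogate pairwise objective
\[
\Pr_{i^+,\,i^-\sim\mathbb{P}_{\text{neg}}}\big[f_{u,i^+}>f_{u,i^-}\big],
\]
with $\mathbb{P}_{\text{neg}}$ now the beam distribution. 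The positive weight $\sqrt{p(1-p)}$ only reweights users, so it does not change the per-user target.

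\textbf{Step 2: substitute Lemma~\ref{lem:beam-hard-neg}.}
In the limit $B\to\infty$, the beam distribution equals $\mathbb{P}_{\text{hard}}$, which is supported on $\mathcal{Q}_u^{-}(\alpha+d)$. This gives the displayed expression.

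\textbf{Step 3: identify the expression with normalized OPAUC.}
Using the threshold $\eta_{\alpha+d}$ from Eq.~\eqref{eq:eta_d}, rewrite
\[
\Pr_{i^-\sim\mathbb{P}_{\text{hard}}}\big[f_{u,i^+}>f_{u,i^-}\big]
= \frac{1}{\alpha+d}\,\Pr_{i^-\sim\mathcal{I}_u^-}\big[f_{u,i^+}>f_{u,i^-},\ f_{u,i^-}\ge\eta_{\alpha+d}\big].
\]
By the standard ROC-area identity, the right-hand side equals $\mathrm{OPAUC}(\alpha+d)/(\alpha+d)$: the area under the ROC curve over $\mathrm{FPR}\in[0,\alpha+d]$ is the probability that a positive beats a negative drawn from the top-$(\alpha+d)$ quantile, times $\alpha+d$. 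Because $\alpha+d$ is a constant, maximizing the objective is equivalent to maximizing $\mathrm{OPAUC}(\alpha+d)$. Averaging over $u$ finishes the argument.

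\textbf{Main obstacle.}
The hard part is Step 3's requirement that $\mathbb{P}_{\text{hard}}$ be \emph{uniform} over $\mathcal{Q}_u^{-}(\alpha+d)$. Beam search actually returns the top-$B$ items deterministically, and any sampling among them is typically weighted by $\pi_{old}$.

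My first attempt is to read Lemma~\ref{lem:beam-hard-neg} as supplying uniform sampling on the quantile set, and then derive exact equivalence from that. If the induced weights are instead proportional to $f_{u,i^-}$, I fall back to showing that the objective is a reweighted OPAUC. Its weights are positive and supported exactly on $\mathcal{Q}_u^-(\alpha+d)$, so it shares the same maximizer, namely perfect separation of $i^+$ above the quantile set. In that weaker sense it is still equivalent to optimizing OPAUC.

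Ties at $\eta_{\alpha+d}$ are handled by assuming continuous scores, or by randomized tie-breaking.
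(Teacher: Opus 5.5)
Your proposal is correct and follows essentially the same route as the paper. You invoke Lemma~\ref{lem:grpo-to-auc} to get a pairwise push with a decoding-dependent negative law, substitute $\mathbb{P}_{\text{hard}}$ from Lemma~\ref{lem:beam-hard-neg}, and identify sampling from $\mathbb{P}_{\text{hard}}$ with conditioning on the upper-tail event $\{f_{u,i^-}\ge\eta_{\alpha+d}\}$; note that the paper uses the normalized convention $\mathrm{OPAUC}(\alpha+d)=\mathrm{WPAUC}(0,\alpha+d)$, so your factor $1/(\alpha+d)$ simply disappears. Your uniformity worry is resolved exactly as the paper resolves it: in the GRPO group average every beam output carries weight $1/G$, so the induced negative law is uniform over the returned negatives (the proof of Lemma~\ref{lem:beam-hard-neg} defines $\mathbb{P}_{\text{neg}}$ this way), and the score-proportional fallback is unnecessary.
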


The direct derivation can be found in Appendix~\ref{appendix:proof_proposition_grpo_opauc}.

% \paragraph{Implication for Top-$K$ Ranking Metrics.}
% Following the analysis in~\citet{DBLP:conf/www/ShiCFZWG023},
% $\mathrm{OPAUC}(d)$ exhibits a stronger connection with Top-$K$ ranking metrics than AUC.
% Consequently, Proposition~\ref{prop:grpo-opauc} provides a theoretical
% justification for the empirical gains of beam search-based GRPO in Top-$K$
% recommendation.

\paragraph{Implication for Top-$K$ alignment.}
By concentrating comparisons on high-score negatives, OPAUC exhibits a tighter connection to Top-$K$ ranking metrics than AUC~\citep{DBLP:conf/www/ShiCFZWG023}.
Therefore, Proposition~\ref{prop:grpo-opauc} provides a theoretical explanation for why beam-search-based GRPO tends to deliver stronger Top-$K$ performance: beam decoding reshapes the induced negative sampling distribution toward the low-FPR region, yielding a Top-$K$-aligned partial-AUC objective.

\subsection{Windowed Partial AUC (WPAUC)}
\label{sec:wpauc}

The above analysis clarifies how beam-search negatives strengthen GRPO by shifting the induced objective toward OPAUC.
However, OPAUC only imposes an upper bound on the FPR, making it hard to optimize a specific $K$.
To enable finer-grained control, we introduce WPAUC, which evaluates ranking quality over a shifted low-FPR window and thereby emphasizes a specific Top-$K$ metric.

\paragraph{Definition.}
Fix a user $u$ with positive set $\mathcal{I}^+_u$ and negative set $\mathcal{I}^-_u$ (with $n^- \coloneqq |\mathcal{I}^-_u|$).
Let $f_{u,i}$ be the predicted score (e.g., the generation probability in Eq.~\eqref{eq:fui}).
Let $\{i^-_{(\sigma)}\}_{\sigma=1}^{n^-}$ denote negative items sorted by descending scores,
$f_{u,i^-_{(1)}} \ge \cdots \ge f_{u,i^-_{(n^-)}}$.
For two FPR bounds $0 \le \alpha < \alpha+d \le 1$, we define the \emph{windowed negative set}
\begin{equation}
\label{eq:window-neg}
\mathcal{W}_u^-(\alpha,d)
\;\coloneqq\;
\Big\{ i^-_{(\sigma)} \in \mathcal{I}_u^- \;:\;
\lceil \alpha n^- \rceil < \sigma \le \lceil (\alpha+d) n^- \rceil
\Big\}.
\end{equation}
We then define WPAUC($\alpha,d$) as:
\begin{equation}
\label{eq:wpauc}
\mathrm{WPAUC}_u(\alpha,d)
\;\coloneqq\;
\Pr_{\,i^+\sim \mathcal{I}_u^+,\, i^-\sim \mathcal{W}_u^-(\alpha,d)}
\!\Big[f_{u,i^+} > f_{u,i^-}\Big].
\end{equation}
Notably, OPAUC is a special case: $\mathrm{OPAUC}(\alpha+d)=\mathrm{WPAUC}(0, \alpha+d)$, and AUC corresponds to $\mathrm{AUC}=\mathrm{WPAUC}(0,1)$.

\paragraph{Why windowing helps Top-$K$ control.}
Unlike OPAUC, which aggregates all negatives in $[0,\alpha+d]$, WPAUC can \emph{shift} the window to concentrate on the specific FPR region that matches a desired $K$.

\begin{theorem}[WPAUC yields tighter Top-$K$ bounds than OPAUC]
\label{thm:wpauc-topk}
Suppose there are $n^+$ positive items and $n^-$ negative items with $n^+<K$ and $n^->K$.
Rank all items in descending order according to scores from any model $f$. Define the FPR window
\begin{equation}
\alpha \;=\; \frac{K-n^+}{n^-},\qquad d \;=\; \frac{n^+}{n^-}.
\label{eq:beta-window}
\end{equation}
Then we have that
\begin{equation}\label{eq:wpauc-recall-bound}
\begin{split}
\frac{\Big\lceil n^+\big(1-\sqrt{1-w}\big)\Big\rceil}{n^+} \ \le\ \text{Recall}@K \ \le\ \frac{\Big\lfloor n^+\sqrt{w}\Big\rfloor}{n^+},
\end{split}
\end{equation}
where $w=\text{WPAUC}(\alpha,d)$. Moreover, for the same $K$, the interval induced by \eqref{eq:wpauc-recall-bound}
is strictly tighter than the corresponding bounds obtained from $\mathrm{OPAUC}(\alpha+d)$.
\end{theorem}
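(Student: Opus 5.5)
The proof is a double-counting argument on the $n^+\times n^+$ pairs formed by positives and windowed negatives. With the choice \eqref{eq:beta-window}, $\lceil \alpha n^-\rceil = K-n^+$ and $\lceil(\alpha+d)n^-\rceil = K$. So $\mathcal{W}_u^-(\alpha,d)$ consists exactly of the negatives of negative-rank $\sigma\in\{K-n^++1,\dots,K\}$, and it has $n^+$ elements. Let $m$ be the number of positives in the global Top-$K$, so $\text{Recall}@K=m/n^+$. Then the Top-$K$ list contains exactly $K-m$ negatives, namely those of negative-rank $1,\dots,K-m$. Since $m\le n^+$, we have $K-m\ge K-n^+$. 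The window therefore splits into two parts:
\begin{itemize}
\item $n^+-m$ windowed negatives with ranks $K-n^++1,\dots,K-m$, which lie inside the Top-$K$;
\item $m$ windowed negatives with ranks $K-m+1,\dots,K$, which lie outside it.
\end{itemize}
Write $W=w\,(n^+)^2$ for the number of correctly ordered (positive, windowed negative) pairs. I will treat ties as incorrectly ordered, consistent with the strict inequality in \eqref{eq:wpauc}.

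\emph{Upper bound on recall.} Each of the $m$ Top-$K$ positives is ranked above every item outside the Top-$K$. In particular it beats all $m$ windowed negatives outside the Top-$K$. Hence $W\ge m^2$, i.e. $w\ge (m/n^+)^2$, so $m\le n^+\sqrt{w}$. Since $m$ is an integer, $m\le\lfloor n^+\sqrt w\rfloor$.

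\emph{Lower bound on recall.} Count the misordered pairs symmetrically. Each of the $n^+-m$ positives outside the Top-$K$ is ranked below all $n^+-m$ windowed negatives inside the Top-$K$. Hence $(1-w)(n^+)^2\ge (n^+-m)^2$, which gives $m\ge n^+(1-\sqrt{1-w})$. Integrality then upgrades this to the ceiling. Together these give \eqref{eq:wpauc-recall-bound}. Each step uses only the definition of $\mathcal{W}_u^-$ and the position of the Top-$K$ cutoff, so it holds for any scoring model $f$.

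\emph{Comparison with OPAUC.} For the tightness claim I will run the same double counting with $v=\mathrm{OPAUC}(\alpha+d)=\mathrm{WPAUC}(0,K/n^-)$, whose negative set is the top $K$ negatives; this is essentially the analysis of \citet{DBLP:conf/www/ShiCFZWG023}. Now the Top-$K$ negatives (ranks $\le K-m$) are only $K-m$ of the $K$ negatives in this set. The guaranteed wins become $m\cdot m$ out of $n^+K$ pairs, and the guaranteed losses become $(n^+-m)(K-m)$ out of $n^+K$ pairs. This yields the recall interval
\[
\Bigl\{\, r : \ r^2 \le \tfrac{K}{n^+}\,v \ \text{ and }\ (1-r)\bigl(K-n^+ r\bigr)\le K(1-v) \,\Bigr\}.
\]
I will then show this interval strictly contains the WPAUC one. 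Here I need to compare the two interval widths under a common realized ranking, i.e. express $v$ in terms of $w$ and the counts on the remaining $K-n^+$ top negatives, and use $K>n^+$.

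I expect this last comparison to be the main obstacle. The two bounds are stated in terms of different statistics ($w$ vs.\ $v$), so ``strictly tighter'' must be made precise. I would prove it as follows: for every admissible value of recall, the set of $(w)$-values consistent with it is narrower, in normalized terms, than the set of $v$-values consistent with it. Equivalently, the extra $K-n^+$ negatives in the OPAUC set contribute pairs whose outcome is unconstrained by $m$. If this is awkward, I would fall back on exhibiting, for each $m$, two rankings with the same $v$ but different recall, whereas the WPAUC interval pins recall down more narrowly.
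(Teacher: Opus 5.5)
Your derivation of the two Recall@$K$ bounds is correct and matches the paper's argument. The paper restricts to the $2n^+$ items $\mathcal{I}^+\cup\mathcal{W}^-(\alpha,d)$ and notes that exactly $m$ positives fall in the first half. It then reads off the extremal arrangements, with $m^2$ and $m(2n^+-m)$ concordant pairs. Your guaranteed wins $m\cdot m$ and guaranteed losses $(n^+-m)^2$ are the same two extremes.

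One slip concerns ties. If ties count as misordered, ``ranked above every item outside the Top-$K$'' does not imply ``beats''. The upper bound then really fails: all scores equal gives $w=0$, while tie-breaking can still put positives in the Top-$K$. So you need a no-ties assumption, which the paper also uses tacitly.

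The genuine gap is the ``Moreover'' part: the containment you plan to prove is false. Take $n^+=4$, $K=5$, $n^-=6$, positives $p_1,\dots,p_4$, negatives $q_1,\dots,q_6$, and the ranking $p_1,p_2,p_3,q_1,\dots,q_6,p_4$.
\begin{itemize}
\item Recall@$5=3/4$.
\item The window is $\{q_2,\dots,q_5\}$, so $w=12/16=3/4$.
\item The OPAUC set $\{q_1,\dots,q_5\}$ gives $v=15/20=3/4$.
\item The WPAUC interval is $[\lceil 2\rceil/4,\lfloor 2\sqrt{3}\rfloor/4]=[1/2,3/4]$. The value $1/2$ is really attained with $w=3/4$, e.g.\ by the ranking $p_1,p_2,q_1,q_2,q_3,p_3,p_4,q_4,q_5,q_6$.
\item Your own OPAUC constraints read, in counts, $m^2\le 15$ and $(4-m)(5-m)\le 5$. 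These force $m=3$.
\end{itemize}
So here OPAUC pins recall down exactly and WPAUC does not. No ranking-by-ranking comparison in WPAUC's favor, whether by containment or by width, can be proved.

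Your alternative formalization fails too. For fixed $m$, the normalized widths of the consistent $w$- and $v$-ranges are $2m(n^+-m)/(n^+)^2$ and $m(n^++K-2m)/(n^+K)$. The first minus the second is $m(K-n^+)(n^+-2m)/((n^+)^2K)$, which is positive whenever $0<m<n^+/2$. The fallback only shows that OPAUC does not determine recall, which is equally true of WPAUC once $n^+\ge 2$.

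The paper argues differently, but its route does not close the gap either. It writes $o=\frac{K-n^+}{K}A+\frac{n^+}{K}B$, where $A$ is the win rate against the negatives of rank at most $K-n^+$ and $B=w$. It proves $A\le B$ and concludes $B\in[o,\min\{1,\frac{K}{n^+}o\}]$; that much is correct. It then asserts that every recall value in $[\underline{R}(o),\overline{R}(\min\{1,\frac{K}{n^+}o\})]$ is consistent with $o$. The example above refutes this, since that interval would admit recall $1/2$.

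The missing constraint is $A\le\text{Recall}@K$. All negatives of rank at most $K-n^+$ lie in the Top-$K$, so only Top-$K$ positives can beat them. With recall $1/2$ this gives $A\le 1/2$ and $B\le 3/4$, hence $o\le 7/10<3/4$.

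To complete this part, you must first replace ``strictly tighter'' by a precise statement that is actually true, such as the localization of $B$ given $o$. A comparison of the two recall intervals for each ranking cannot work.
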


\paragraph{A special case: equivalence to Recall@K with one positive.}
The following lemma formalizes an extreme setting where WPAUC exactly recovers the Recall@$K$ metric.

\begin{lemma}[One-positive equivalence]
\label{lem:wpauc-eq-recall}
Consider a user with exactly one positive item and $n^-$ negatives.
Let $K\ge 2$ and set
$\alpha=\frac{K-1}{n^-}, d=\frac{1}{n^-}.$
Then
\begin{equation}
\label{eq:wpauc-eq-recall}
\mathrm{WPAUC}(\alpha,d) \;=\; \mathrm{Recall@K}.
\end{equation}
\end{lemma}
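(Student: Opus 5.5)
The plan is to compute both sides directly from the definitions and show that each is the same indicator. First I will evaluate the windowed negative set $\mathcal{W}_u^-(\alpha,d)$ in Eq.~\eqref{eq:window-neg} for the given parameters. Since $\alpha n^- = K-1$ and $(\alpha+d)n^- = K$ are integers, the ceilings are exact. The condition $\lceil \alpha n^-\rceil<\sigma\le\lceil(\alpha+d)n^-\rceil$ then reads $K-1<\sigma\le K$, so $\mathcal{W}_u^-(\alpha,d)=\{i^-_{(K)}\}$: the window contains exactly the $K$-th highest-scored negative. This needs $K\le n^-$ so that $i^-_{(K)}$ exists. I will state this implicitly assumed condition explicitly, matching the standing assumption $n^->K$ of Theorem~\ref{thm:wpauc-topk}. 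The hypothesis $K\ge2$ only guarantees $\alpha>0$, so that this is a genuine window rather than the OPAUC case.

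Next, with $\mathcal{I}_u^+=\{i^+\}$ and a singleton window, the probability in Eq.~\eqref{eq:wpauc} is over a degenerate distribution. Hence $\mathrm{WPAUC}(\alpha,d)=\mathbb{1}\big[f_{u,i^+}>f_{u,i^-_{(K)}}\big]\in\{0,1\}$.

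Then I will express Recall@$K$ as an indicator. With one positive, $\mathrm{Recall@K}=\mathbb{1}[\mathrm{rank}(i^+)\le K]$, where rank is taken among all $n^-+1$ items. The positive lies in the top $K$ iff at most $K-1$ negatives are ranked above it.

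The key step is to prove both directions of $\mathrm{rank}(i^+)\le K \iff f_{u,i^+}>f_{u,i^-_{(K)}}$:
\begin{itemize}
\item If $f_{u,i^+}>f_{u,i^-_{(K)}}$, then every negative $i^-_{(\sigma)}$ with $\sigma\ge K$ scores strictly below $i^+$. Only the negatives $i^-_{(1)},\dots,i^-_{(K-1)}$ can precede $i^+$, so $\mathrm{rank}(i^+)\le K$.
\item If $f_{u,i^+}\le f_{u,i^-_{(K)}}$, then $i^-_{(1)},\dots,i^-_{(K)}$ all score at least $f_{u,i^+}$. Hence at least $K$ negatives precede the positive, and $\mathrm{rank}(i^+)\ge K+1$.
\end{itemize}

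The only real obstacle is ties, which the second direction exposes. If $f_{u,i^+}=f_{u,i^-_{(K)}}$, the tied negative precedes $i^+$ only if ties are broken pessimistically against the positive. I will therefore fix the convention that tied items are ordered with negatives first, which is the convention consistent with the strict inequality in the AUC-type definitions; alternatively, one can assume scores are almost surely distinct. With this convention the two indicators coincide, which establishes Eq.~\eqref{eq:wpauc-eq-recall}.
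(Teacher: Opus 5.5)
Your proof is correct and follows the same route as the paper's: you identify the singleton window, reduce $\mathrm{WPAUC}(\alpha,d)$ to the indicator that the positive beats that one negative, and match it with $\mathrm{Recall@K}=\mathbb{I}\{\mathrm{rank}(i^+)\le K\}$. Your identification $\mathcal{W}_u^-(\alpha,d)=\{i^-_{(K)}\}$ is the correct one under the paper's 1-indexed definition, whereas the paper's proof names the $(K-1)$-th negative, an off-by-one slip; your explicit tie-breaking convention and the $K\le n^-$ requirement also make precise what the paper leaves implicit.
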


The complete proof is deferred to Appendix~\ref{app:wpauc-topk}.

%% 图修改，OPAUC-》WPAUC
\begin{figure}[t]
    \centering
    \includegraphics[width=0.95\linewidth]{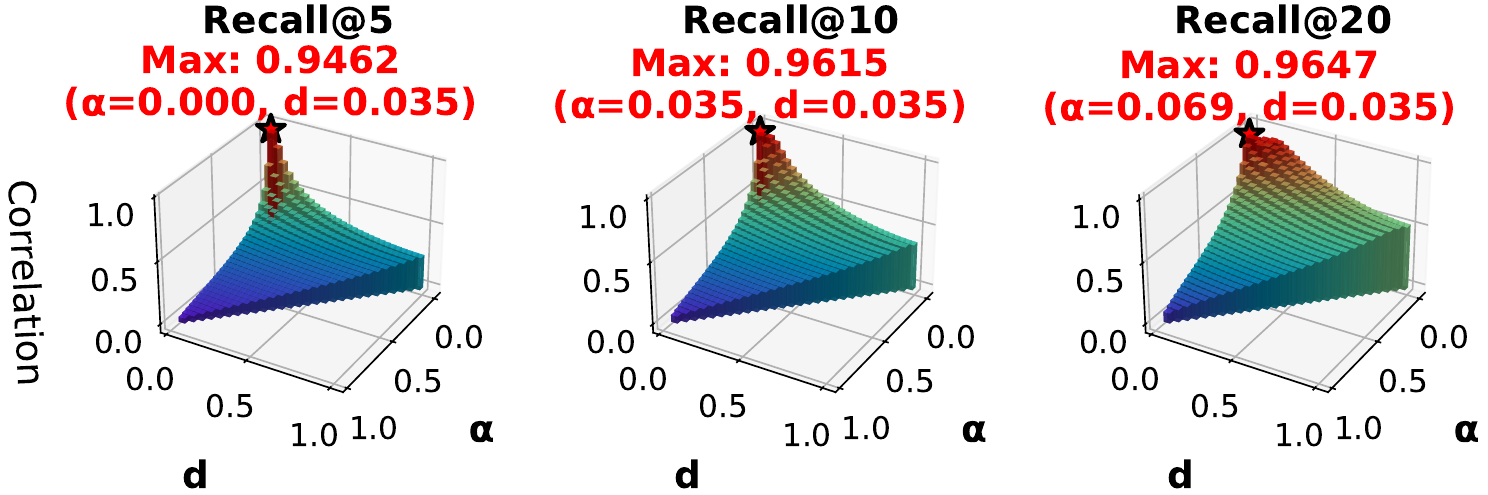}
    \caption{Pearson correlation between Recall@$K$ and WPAUC($\alpha,d$) over parameters $\alpha$ and $d$, for $K\in\{5,10,20\}$. The red star indicates the parameter setting that achieves the maximum correlation in each panel.}
    \label{fig:WPAUC_correlation}
\end{figure}

\paragraph{Empirical Analysis}
We provide simulation evidence to validate the Top-$K$ alignment of WPAUC. We conduct Monte Carlo experiments on synthetic rankings: in each trial, we sample a random permutation containing 10 positives item and 200 negative items, and evaluate both $\mathrm{Recall}@K$ and $\mathrm{WPAUC}(\alpha,d)$ over a grid of window parameters. Repeating this procedure for 10,000 Monte Carlo trials, we compute the resulting Pearson correlation between $\mathrm{Recall}@K$ and $\mathrm{WPAUC}(\alpha,d)$ across trials, and report it in Figure~\ref{fig:WPAUC_correlation}. From the figure, we observe that:
\begin{itemize}[leftmargin=*]
    \item For each $K$, the correlation between WPAUC$(\alpha,d)$ and $\mathrm{Recall}@K$ is maximized at a specific $(\alpha,d)$, indicating that an appropriately chosen low-FPR window yields the strongest Top-$K$ alignment.
    \item The maximizers shift monotonically with $K$: as $K$ increases, the optimal window moves to larger $\alpha$ (i.e., deeper into the negative-score tail), while the optimal window width $d$ remains relatively small. This trend matches the intuition in Theorem~\ref{thm:wpauc-topk}.
\end{itemize}
Overall, these results substantiate that optimizing $\mathrm{WPAUC}(\alpha,d)$ can serve as a practical surrogate for targeted Top-$K$ ranking quality.

\subsection{Threshold-Adjusted Windowed Reweight (TAWin)}\label{method;thretopk}

While WPAUC motivates windowed negative selection, naively discarding negatives outside the target window is undesirable: it wastes sampled negatives (low sample efficiency, especially for costly RL rollouts) and introduces discontinuous inclusion decisions that amplify gradient variance. To address this, we propose \textbf{TAWin}, which replaces hard truncation with a soft threshold-adjusted windowed reweight operator that preserves the inductive bias.

\paragraph{Soft Top-$K$ selection operator.}
To construct TAWin, we first revisit the soft Top-$K$ selection operator.
Given a score vector $\bm{x}\in\mathbb{R}^n$, the hard Top-$K$ operator $T_K(\bm{x})$ returns a $K$-hot indicator vector.
In contrast, a soft Top-$K$ operator defines a continuous mapping
$S_{K,\tau}:\mathbb{R}^n \rightarrow \Delta^{n-1}_K$, where 
\begin{equation}
\Delta^{n-1}_K \triangleq \Big\{\bm{p}\in[0,1]^n \,\big|\, \sum_{i=1}^n p_i = K\Big\},
\end{equation}
where temperature $\tau$ controls smoothness. We require: (i) \textbf{Monotonicity:} if $x_i \ge x_j$, then $[S_{K,\tau}(\bm{x})]_i \ge [S_{K,\tau}(\bm{x})]_j$; (ii) \textbf{Consistency:} $\lim_{\tau\to 0^+} S_{K,\tau}(\bm{x}) = T_K(\bm{x})$.

\paragraph{Threshold-adjusted soft Top-$K$ operator $\mathcal{T}_{K,\tau}$.}
Following~\citet{kexuefm-10373}, we use the clipped exponential function
$g(x)=\min(1,e^x)$ to instantiate the threshold-adjusted soft Top-$K$ selection operator $\mathcal{T}_{K,\tau}$:
\begin{equation}
\begin{split}
\mathcal{T}_{K,\tau}(\bm{x})
&\triangleq
\min\!\Big(\mathbf{1},\exp\!\big(\tfrac{\bm{x}-\lambda(\bm{x})\mathbf{1}}{\tau}\big)\Big),
\\
\text{s.t.}\ \sum_{i=1}^n &\mathcal{T}_{K,\tau}(\bm{x})_i = K,
\label{eq:thretopk-clip-exp}
\end{split}
\end{equation}
where $\mathbf{1}$ denotes the n-dimensional all-ones vector and the threshold $\lambda(\bm{x})$ is the unique scalar that calibrates the total mass to $K$.
In practice, sort $x_{(1)}\ge\cdots\ge x_{(n)}$ and find an index $m<K$ such that $x_{(m)}\ge \lambda \ge x_{(m+1)}$, then $\lambda(\bm{x})$ has the closed form:
\begin{equation}
\lambda(\bm{x})
=
\tau\log\!\Big(\sum_{i=m+1}^{n} \exp\!\big(\tfrac{x_{(i)}}{\tau}\big)\Big)
-\tau\log(K-m),
\label{eq:thretopk-lambda-one-line}
\end{equation}
with $m$ obtained by a short scan over $m\in\{0,\ldots,K-1\}$.
As $\tau\to 0^+$, $\mathcal{T}_{K,\tau}$ becomes sharp and approaches the hard Top-$K$ selector $T_K$.

\begin{table*}[t]
\centering
\caption{Performance comparison under Top-$K$ metrics across four datasets. The best performance is
highlighted in boldface.}
\label{tab:main_table}
\resizebox{\textwidth}{!}{
\begin{tabular}{l ccc ccc ccc ccc}
\toprule
\multirow{2}{*}{Method} & 
\multicolumn{3}{c}{Toys} & 
\multicolumn{3}{c}{Industrial} & 
\multicolumn{3}{c}{Office} & 
\multicolumn{3}{c}{Yelp} \\
\cmidrule(lr){2-4}
\cmidrule(lr){5-7}
\cmidrule(lr){8-10}
\cmidrule(lr){11-13}

& R@1 & R@3 & N@3
& R@1 & R@3 & N@3
& R@1 & R@3 & N@3
& R@1 & R@3 & N@3 \\
\midrule

GRU4Rec
& 0.0090   &  0.0169  & 0.0135
& 0.0461   &  0.0657  & 0.0576
& 0.0384   &  0.0631  & 0.0529
& 0.0074   &  0.0164  & 0.0125\\

Caser
& 0.0125   & 0.0219   & 0.0180
& 0.0371   & 0.0624   & 0.0523
& 0.0450   & 0.0734   & 0.0614
& 0.0079   & 0.0217   & 0.0158 \\

SASRec
& 0.0216   & 0.0359   & 0.0298
& 0.0567   & 0.0761   & 0.0682
& 0.0641   & 0.0923   & 0.0807
& 0.0074   & 0.0175   & 0.0133\\
\midrule

TIGER
& 0.0224   & 0.0383   &  0.0305
& 0.0632   & 0.0852   & 0.0742
& 0.0624   &  0.0986   &  0.0852
& 0.0068   &  0.0154  &  0.0113
\\
LC-Rec
& 0.0253   & 0.0406   & 0.0341
& 0.0727   &  0.0986  & 0.0877
& 0.0900   &  0.1196  & 0.1074
& 0.0094   &  0.0174  & 0.0140 \\

MiniOneRec
& 0.0271   & 0.0458   & 0.0378
& 0.0831   &  0.1100  & 0.0990
& 0.0972   &  0.1219  & 0.1137
& 0.006   &   0.0163 &  0.0120
\\
\midrule
BigRec
& 0.0329   & 0.0510   & 0.0433
& 0.0732   & 0.1012   & 0.0895
& 0.0861   & 0.1201   & 0.1048
& 0.0092   & 0.0187   & 0.0148\\

D3
& 0.0371  & 0.0612   & 0.0512
& 0.0810   & 0.1103   & 0.0980
& 0.0810   & 0.1204   & 0.1040
& 0.0120   & 0.0309   & 0.0228 \\

S-DPO
& 0.0275    & 0.0534   & 0.0449
& 0.0635   &  0.1032  & 0.0906
& 0.0390   & 0.1169   & 0.1033
& 0.0189   & 0.0342   & 0.0395 \\

ReRe
& 0.0411 & 0.0709 & 0.0583
& 0.0783 & 0.1184 & 0.1016
& 0.0830 & 0.1304 & 0.1115
&  0.0206 & 0.0360 & 0.0295
 \\

TAWin
& \textbf{0.0471} & \textbf{0.0761} & \textbf{0.0639} 
& \textbf{0.0904} & \textbf{0.1237} & \textbf{0.1099}
& \textbf{0.0961} & \textbf{0.1341} & \textbf{0.1187}
&  \textbf{0.0227} & \textbf{0.0370} & \textbf{0.0301}
 \\

\bottomrule
\end{tabular}
}
\vspace{-10pt}
\end{table*}

\paragraph{Threshold-Adjusted Windowed Reweight (TAWin)}
Then we construct the TAWin method based on the soft Top-$K$ operator $\mathcal{T}_{K,\tau}$.
For each user $u$, constrained beam search yields $n$ candidate negatives
$\mathcal{Y}_u^-=\{Y_1,\ldots,Y_n\}$, which we rank by model scores $f$.
Let $\sigma_u:\mathcal{Y}_u^-\!\to\!\{1,\ldots,n\}$ map each candidate $Y$ to its rank $\sigma=\sigma_u(Y)$ (higher score $\Rightarrow$ smaller $\sigma$).
Define the normalized rank $\tilde \sigma(\sigma)\triangleq (\sigma-1)/(n-1)$ and an anchor $\sigma_\star$ for the target window's low-FPR boundary.
We assign each candidate a logit by its distance to the anchor:
\begin{equation}
x_u(Y)\;\triangleq\; -\,\big|\tilde \sigma(\sigma_u(Y))-\tilde \sigma(\sigma_\star)\big|.
\end{equation}
Applying $\mathcal{T}_{K,\tau}$ on these logits yields soft windowed weights over the candidates:
\begin{equation}
\bm{w}_u
~=~
\mathcal{T}_{K,\tau}\big([x_u(Y_1),\ldots,x_u(Y_G)]\big),
\label{eq:tawin-weights-rank-map}
\end{equation}
Here, $K$ controls the window width (corresponding to $d$ in WPAUC($\alpha,d$)),
and anchor $\sigma_\star$ determines the window start (corresponding to $\alpha$ in WPAUC($\alpha,d$)).
In the sharp limit (small $\tau$), $\bm{w}_u$ concentrates to a hard window selector $T_K$.

Based on $\bm{w}_u$, we define final weight function $\omega_u(\cdot)$ by
\begin{equation}
\omega_u(Y)\triangleq
\begin{cases}
1, & Y\in\mathcal{Y}_u^+,\\[2pt]
\mathrm{Rescale}\!\big(\bm{w}_u\big)_{\sigma_u(Y)}, & Y\in\mathcal{Y}_u^-,
\end{cases}
\label{eq:tawin-final-weight}
\end{equation}
where $\mathrm{Rescale}(\bm{w}_u)_\sigma=n*w_{u,\sigma}/\sum_{\sigma'=1}^nw_{u,\sigma'}$.

\paragraph{TAWin-Based RL Objective.} TAWin performs sequence level reweighting over rollout candidates by assigning a weight $\omega_u(Y)$ to each sampled sequence $Y$, and scales its policy-gradient contribution to optimize WPAUC($\alpha, d$). This weighting can be applied to multiple RL algorithms (Section~\ref{experiment:rq3}). By default, we use GRPO, yielding: 
\begin{equation}
\begin{aligned}
& \mathcal{J}_{\text{TAWin}}(\theta)
=
\mathbb{E}_{H_u,\, Y_m}
\Bigg[
\frac{1}{n}\sum_{m=1}^{n}\omega_u(Y_m)\;
\frac{1}{|Y_m|}
\sum_{j=1}^{|Y_m|} \\
&\quad \quad \min\Big(
\rho_{m,j,u}\hat A_{m,j},\;
\mathrm{clip}(\rho_{m,j,u},1-\epsilon,1+\epsilon)\hat A_{m,j}
\Big)
\Bigg],
\end{aligned}
\label{eq:tawin-grpo-obj}
\end{equation}
where $\rho_{m,j,u} = \frac{\pi_{\theta}(y_{m,j} \mid y_{m,<j}, H_u)}{\pi_{\text{old}}(y_{m,j} \mid y_{m,<j}, H_u)}$, and the full algorithm is provided in Appendix~\ref{appendix:algorigthm}.

\section{Experiments}
In this section, we evaluate our methods on four public datasets to answer the following questions:

\noindent \textbf{(Q1)} How does TAWin compare with traditional recommenders and LLM recommenders across Top-$K$ metrics?

\noindent \textbf{(Q2)} How does varying TAWin hyperparameters affect performance across different Top-$K$ metrics?

\noindent \textbf{(Q3)} How does TAWin generalize across diverse base models, item encoding strategies, and optimization methods?
    % \item \textbf{(Q4)} How sensitive is TAWin to key hyperparameters and design choices?

\paragraph{Datasets and Metrics.} We evaluate TAWin on four real-world datasets, including three subsets (\textbf{Toys}, \textbf{Industrial}, and \textbf{Office}) from Amazon Review Dataset\footnote{\url{https://cseweb.ucsd.edu/~jmcauley/datasets.html\#amazon_reviews}}~\citep{lakkaraju2013s} and \textbf{Yelp} Dataset\footnote{\url{https://business.yelp.com/data/resources/open-dataset/}}. To assess Top-$K$ performance, we use Recall (R@K) and Normalized Discounted Cumulative Gain (N@K). More details are provided in Appendix~\ref{appendix:dataset} and~\ref{appendix:metrics}.

\paragraph{Baselines.} We consider three categories of baseline methods: (1) Traditional sequential recommendation models, including GRU4Rec~\citep{hidasi2015session}, Caser~\citep{tang2018personalized}, and SASRec~\citep{kang2018self}. (2) Generative recommendation models, represented by TIGER~\citep{DBLP:conf/nips/RajputMSKVHHT0S23}, LC-Rec~\citep{zheng2024adapting}, MiniOneRec~\citep{kong2025minionerec}. (3) LLM-based recommendation models, including BigRec~\citep{DBLP:journals/tors/BaoZWZYLCFT25}, D3~\citep{bao2024decoding}, S-DPO~\citep{DBLP:conf/nips/ChenTZ0SZWC24} and ReRe~\citep{DBLP:journals/corr/abs-2510-12211}. Additional details are provided in Appendix~\ref{appendix:Baselines}.

% Requires: \usepackage{booktabs}
% Optional: \usepackage{array}

\begin{table*}[t]
\centering
\caption{Comparison of ReRe and TAWin across different optimization algorithms. The best results are in bold.}
\label{tab:ablation_dapo_gspo}
\resizebox{0.9\textwidth}{!}{
\begin{tabular}{llccc ccc ccc }
\toprule
& & \multicolumn{3}{c}{Toys} & \multicolumn{3}{c}{Industrial} & \multicolumn{3}{c}{Office}\\
\cmidrule(lr){3-5} \cmidrule(lr){6-8} \cmidrule(lr){9-11}
Algorithm & Method & R@1 & R@3 & N@3 & R@1 & R@3 & N@3 & R@1 & R@3 & N@3 \\
\midrule
\multirow{2}{*}{DAPO}
& ReRe  
& 0.0416 & 0.0711 & 0.0588 
& 0.0814 & 0.1202 & 0.1039 
& 0.0832 & 0.1276 & 0.1098\\
& TAWin  
& \textbf{0.0471} & \textbf{0.0737} & \textbf{0.0626} 
& \textbf{0.0875} & \textbf{0.1248} & \textbf{0.1089} 
& \textbf{0.0910}  & \textbf{0.1317}  & \textbf{0.1149} \\
\midrule
\multirow{2}{*}{GSPO}
& ReRe  
& 0.0404 & 0.0717 & 0.0585
& 0.0736 & 0.1166 & 0.0988 
& 0.0816 & 0.1284 & 0.1094\\
& TAWin  
& \textbf{0.0463} & \textbf{0.0750} & \textbf{0.0631} 
& \textbf{0.0867} & \textbf{0.1224} & \textbf{0.1073} 
& \textbf{0.9556}  & \textbf{0.1305}  & \textbf{0.1163} \\
\bottomrule
\end{tabular}
}
\vspace{-10pt}
\end{table*}

\paragraph{Implementations.} Following~\citet{DBLP:journals/corr/abs-2510-12211}, we train TAWin with a mini-batch size of 512. We use a fixed learning rate of $1\times10^{-5}$ and set the GRPO coefficient $\beta$ to $1\times10^{-3}$. For each input, we sample $n=16$ candidate sequences. We train TAWin for two epochs initialized from the vanilla Qwen2.5-0.5B~\citep{qwen2025qwen25technicalreport} checkpoint on 8 NVIDIA H800 GPUs. Unless otherwise specified, the parameter $K$ in TAWin is set to 4. The anchor $\sigma_\star$ is selected from $\{0,1,2\}$ and the temperature $\tau$ is selected via grid search over $\{1/2,1/3,1/4,1/5,1/6\}$. All LLM-based recommenders use Qwen2.5-0.5B as the backbone.

For ReRe, we adopt beam-search generation with rule-based rewards. The ranking reward variant can be viewed as a special reweighting scheme and its detailed analysis and comparisons are provided in Appendix~\ref{appendix:selection_ablaton}. Additional baseline hyperparameters are deferred to Appendix~\ref{appendix:implement_details}.

\subsection{Overall Experiments (RQ1)}
We evaluate the performance of TAWin across four datasets, with the results reported in Table~\ref{tab:main_table}. From the table, we have the following key insight:
\begin{itemize}[leftmargin=*]
    \item \textbf{Superiority of TAWin across All Baselines.} TAWin consistently establishes a new state-of-the-art, outperforming all baselines on all datasets. Specifically, TAWin achieves average relative improvements of 84.9\%, 52.0\%, and 5.5\% over traditional, generative, and LLM-based recommenders, respectively. We attribute this performance gain to the windowed integration of negative items within the RL framework, which produces a training objective that aligns better with Top-$K$ metrics.
    \item \textbf{Efficacy of LLM-based Recommenders.} LLM-based recommenders demonstrate a clear margin of superiority over traditional sequential recommenders. This improvement stems from their rich semantic representations and extensive world knowledge, thereby strengthening the modeling of user interests.
    \item \textbf{The Advantage of RL in Alignment.} Our results indicate that RL-based methodologies yield consistent improvements over their Supervised Fine-Tuning (SFT) counterparts. For instance, MiniOneRec surpasses TIGER by an average of 22.6\%, while TAWin exceeds BigRec by 49.4\%. This is attributed to RL’s ability to model users’ relative preferences, yielding better alignment with recommendation objectives.
\end{itemize}

\begin{figure}[t]
    \centering
    \includegraphics[width=0.98\linewidth]{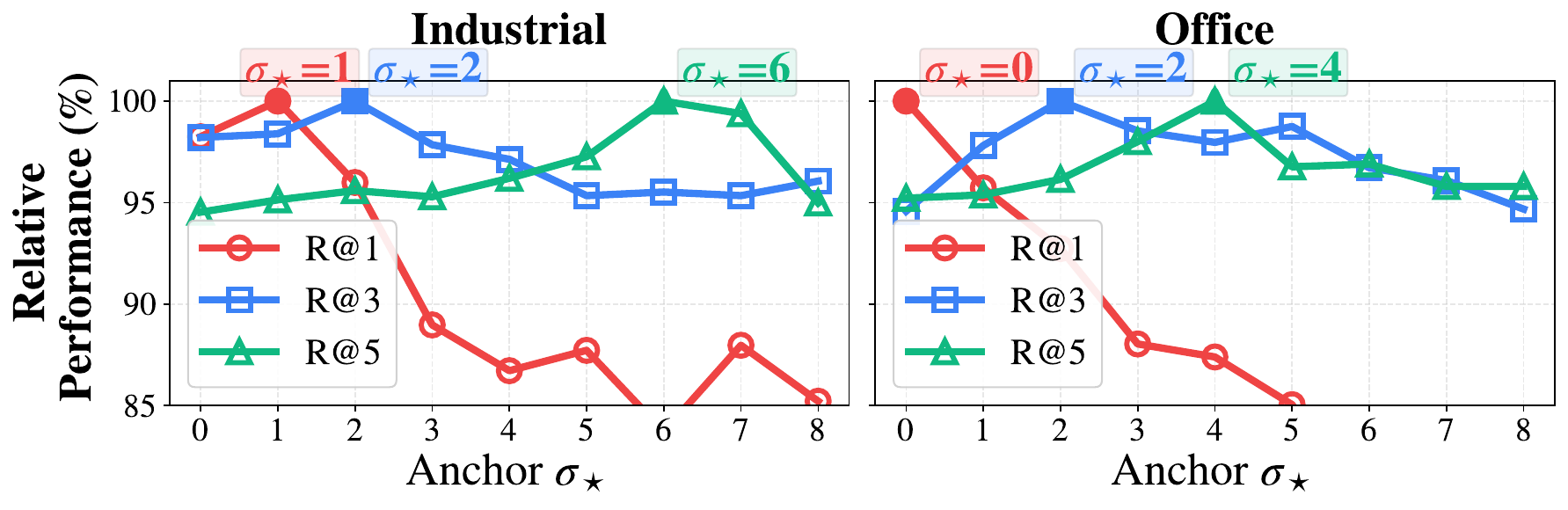}
    \caption{TAWin sensitivity to the anchor $\sigma_\star$ (i.e., $\alpha$ in WPAUC($\alpha,d$)) on two datasets. For each curve, the best-performing setting is highlighted with a filled marker and an annotation indicating the corresponding $\sigma_\star$.}
    \label{fig:WPAUC_hyperparameter}
\end{figure}

\subsection{Effect of Top-$K$ Control in TAWin (RQ2)}
To investigate the controllability of the TAWin, we examine whether modulating its hyperparameters can systematically bias optimization toward specific Top-$K$ metrics. Figure~\ref{fig:WPAUC_hyperparameter} illustrates the relative performance across varying anchor parameter ($\sigma_\star$) on the Industrial and Office datasets. Our analysis reveals several key insights:
\begin{itemize}[leftmargin=*]
    \item \textbf{Unimodal Trend as $\sigma_\star$ Varies.} As shown in Figure~\ref{fig:WPAUC_hyperparameter}, for a fixed target R@K with specific $K$, the relative performance exhibits a  unimodal trend as the anchor $\sigma_\star$ varies. Performance reaches its zenith at a specific anchor value, whereas anchor $\sigma_\star$ that are either too small or overly large result in suboptimal alignment with the targeted Top-$K$ metric. This empirical behavior is consistent with our analytical findings in Section~\ref{sec:wpauc}.
    \item \textbf{Optimal $\sigma_\star$ Increases with Target $K$.} We observe a consistent correlation between the target metric and the optimal anchor value: specifically, smaller target $K$ metrics (e.g., R@1) achieve peak performance at lower anchor values ($\sigma_\star=1$ for Industrial, $\sigma_\star=0$ for Office) compared to larger $K$ metrics like R@5, which peak at higher anchors ($\sigma_\star=6$ and $\sigma_\star=4$, respectively). This trend aligns with our theoretical analysis and simulation results in Section~\ref{sec:wpauc}, indicating that TAWin can flexibly adapt its optimization focus toward different Top-$K$ regimes through appropriate hyperparameter control.
\end{itemize}

\begin{figure}[t]
    \centering
    \includegraphics[width=0.98\linewidth]{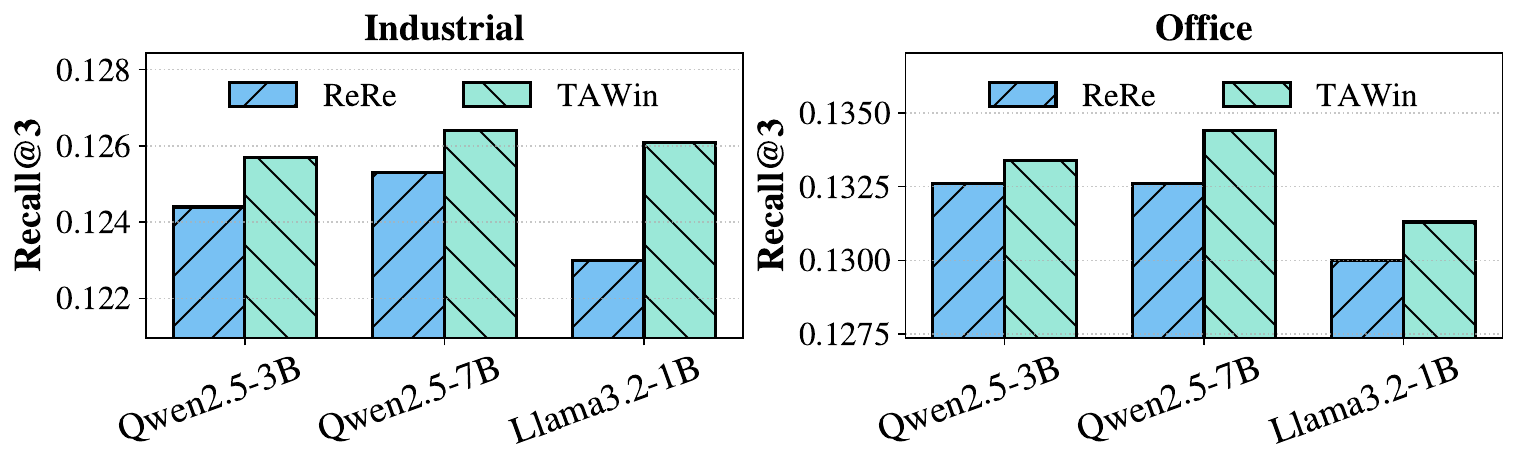}
    \caption{Comparison between ReRe and TAWin across different base models.}
    \label{fig:model_ablation}
\end{figure}

\subsection{Generalization Analysis (RQ3)}\label{experiment:rq3}
In this section, we assess the robustness of our theoretical insights and the empirical efficacy of TAWin. We examine TAWin's versatility across diverse backbone models, optimization algorithms, and item-encoding paradigms.

\paragraph{Generalization across Backbone Models.} We investigate architectural robustness by evaluating models from distinct families and varying scales: Qwen2.5~\citep{qwen2025qwen25technicalreport} (3B and 7B) and Llama-3.2~\citep{grattafiori2024llama3herdmodels} (1B). Figure~\ref{fig:model_ablation} compares the performance of ReRe and TAWin under these configurations. TAWin consistently outperforms ReRe across all backbones, indicating that the benefits of reshaping the RL objective via hard-negative sampling are model-agnostic. Furthermore, the performance gains remain stable as model size increases, suggesting that TAWin possesses favorable scalability.

\paragraph{Generalization across Optimization Algorithms.}
To evaluate generalization across RL optimization algorithms, we extend our evaluation to GSPO~\citep{zheng2025group} and DAPO~\citep{yu2025dapo}. Given that analogous AUC-style reformulations for these algorithms are established in~\citet{li2025disco}, we investigate whether TAWin’s negative-sampling strategy enhances alignment with Top-$K$ recommendation metrics. Table~\ref{tab:ablation_dapo_gspo} presents the performance on the Toys, Industrial, and Office datasets. TAWin consistently surpasses ReRe under both optimization algorithms, demonstrating that the method's improvements are transferable across different RL optimizers.

\begin{table}[t]
\centering
\caption{Performance comparison based on semantic-ID item encoding strategies. The best results are in bold.}
\label{tab:semantic_id_ablation}
\resizebox{0.9\linewidth}{!}{
\begin{tabular}{llccc}
\toprule
Dataset & Method & R@1 & R@3 & N@3 \\
\midrule

\multirow{4}{*}{\textbf{Industrial}}
& MiniOneRec
& & & \\

& \quad - \textit{SFT}
& 0.0726 & 0.0986 & 0.0877
\\
& \quad - \textit{GRPO}
& 0.0831 & 0.1100 & 0.0989
\\
& \quad - \textit{TAWin}
& \textbf{0.0862} & \textbf{0.1158} & \textbf{0.1033}
\\

\midrule

\multirow{4}{*}{\textbf{Office}}
& MiniOneRec
& & & \\
& \quad - \textit{SFT}
& 0.0900 & 0.1196 & 0.1074
\\
& \quad - \textit{GRPO}
& 0.0969 & 0.1325 & 0.1174
\\
& \quad - \textit{TAWin}
& \textbf{0.0974} & \textbf{0.1378} & \textbf{0.1253}
\\

\bottomrule
\end{tabular}
}
\vspace{-5pt}
\end{table}

\paragraph{Generalization across Item Encoding Strategies.} 
Finally, we evaluate generalization beyond title-based encoding by considering semantic item representations, specifically utilizing MiniOneRec~\citep{kong2025minionerec}. We integrate TAWin’s negative sampling strategy into the MiniOneRec training process (denoted as MiniOneRec-TAWin) and compare it against the standard MiniOneRec baseline. As shown in Table~\ref{tab:semantic_id_ablation}, MiniOneRec-TAWin yields consistent improvements on both the Industrial and Office datasets, validating the robustness of TAWin irrespective of the underlying item-encoding strategy.

% \subsection{Ablation Study (RQ4)}

\section{Relation Work}

% Two-Way Partial AUC and Its Properties
% DeepSeekMath: Pushing the Limits of Mathematical Reasoning in Open Language Models

\paragraph{Generative Recommendation}

Driven by rapid advances in large generative models~\citep{DBLP:journals/corr/abs-2303-08774, DBLP:journals/corr/abs-2501-12948}, recommendation is shifting toward generative modeling~\citep{DBLP:conf/recsys/BaoZZWF023, DBLP:conf/icml/ZhaiLLWLCGGGHLS24}, where models directly generate item identifiers. Two main paradigms have emerged: semantic-ID generation with dense item codes~\citep{DBLP:conf/nips/RajputMSKVHHT0S23, DBLP:conf/www/HouHMZ23, DBLP:conf/icde/ZhengHLCZCW24} and LLM-based recommendation framed as natural language generation~\citep{DBLP:journals/corr/abs-2304-10149, DBLP:journals/tois/ZhangXHZLW25}.

Aligning these generative models with user interests is central. Supervised fine-tuning~\citep{DBLP:conf/sigir/00050LH0Z25, DBLP:journals/tors/BaoZWZYLCFT25} is limited by likelihood-based objectives that weakly capture relative preferences, motivating RL-based tuning~\citep{DBLP:conf/nips/ChenTZ0SZWC24, DBLP:journals/corr/abs-2510-12211}. The OneRec series~\citep{DBLP:journals/corr/abs-2502-18965, DBLP:journals/corr/abs-2506-13695, DBLP:journals/corr/abs-2508-20900} shows that SFT followed by RL yields substantial gains, and recent studies further report consistent benefits from beam-search sampling during RL~\cite{DBLP:journals/corr/abs-2510-12211, DBLP:journals/corr/abs-2506-13695}—but the mechanism is unclear. We address this gap with a unified theoretical framework explaining why beam-search sampling improves RL-based tuning.

\paragraph{Negative Sampling for Recommendation}

In discriminative recommender systems, negative sampling is essential for scalable training~\citep{DBLP:journals/corr/abs-2409-07237}. Existing methods fall into static and dynamic schemes: static samplers (e.g., uniform or popularity-based) draw negatives from fixed item distributions~\citep{DBLP:conf/uai/RendleFGS09, DBLP:conf/recsys/Caselles-DupreL18, DBLP:conf/kdd/ChenSSH17}, whereas dynamic samplers (e.g., DNS~\citep{DBLP:conf/sigir/ZhangCWY13}, PRIS~\citep{DBLP:conf/www/Lian0C20}, AdaSIR~\citep{DBLP:conf/www/ChenLJ0C22}) leverage model scores to prioritize hard negatives~\citep{DBLP:conf/wsdm/RendleF14}. Prior theory~\citet{DBLP:conf/www/ShiCFZWG023} shows that, under supervised learning, negative sampling can implicitly steer optimization toward Top-K metrics and improve ranking. Whether this mechanism extends to RL-based generative recommendations remains unclear.

\paragraph{Partial AUC Optimization}
Partial AUC, introduced in~\citet{articleROC}, evaluates ROC performance over restricted FPR/TPR ranges. OPAUC targets low-FPR regimes relevant to drug discovery~\citep{DBLP:journals/fcsc/GaoWHFZ23, DBLP:conf/kdd/NarasimhanA13}, and its connection to Top-K recommendation has been analyzed in~\citet{DBLP:conf/www/ShiCFZWG023}. Subsequent variants include TPAUC~\citep{yang2019two, yang2021all, yang2022optimizing}, which focuses on the upper-left ROC region but aligns weakly with ranking metrics, and Low-Left Partial AUC~\citep{shi2024lower}, which improves Top-K alignment yet suffers from unstable optimization. Motivated by these gaps, we propose WPAUC, a simple and practical objective that more directly aligns with Top-K ranking goals.

\section{Conclusion}
In this work, we provide a principled explanation of why hard negatives benefit RL-trained LLM recommenders: GRPO with binary rewards optimizes AUC, while beam-search negatives reshape the objective toward Partial AUC, improving alignment with Top-K metrics. Building on this insight, we propose WPAUC for controllable Top-K–oriented optimization and TAWin for efficient soft window selection. Extensive experiments across datasets, backbones, item encodings, and RL optimizers validate the theory and show consistent state-of-the-art gains.

\section{Impact Statement}
This work contributes to a deeper understanding of objective design in reinforcement learning–based LLM recommender systems by revealing how negative sampling implicitly reshapes optimization targets. By formalizing the connection between hard negatives, Partial AUC, and Top-K metrics, our analysis provides a principled foundation for developing recommendation algorithms that are better aligned with practical evaluation criteria, potentially leading to more effective and reliable personalization systems.

From an applied perspective, the proposed WPAUC metric and TAWin optimization framework offer practitioners interpretable and controllable tools for tailoring recommendation behavior to specific Top-K regimes, which may improve user satisfaction and system efficiency in large-scale online platforms. Beyond recommendations, our insights into objective shaping via sampling strategies may inform the design of RL objectives in other LLM-based decision-making and ranking tasks.

As with all recommendation technologies, improved personalization may amplify existing issues such as filter bubbles or unequal exposure if deployed without appropriate safeguards. We encourage future work to combine objective-aligned optimization with fairness, diversity, and transparency considerations.

% In the unusual situation where you want a paper to appear in the
% references without citing it in the main text, use \nocite
\nocite{langley00}

\bibliography{example_paper}
\bibliographystyle{icml2026}

%%%%%%%%%%%%%%%%%%%%%%%%%%%%%%%%%%%%%%%%%%%%%%%%%%%%%%%%%%%%%%%%%%%%%%%%%%%%%%%
%%%%%%%%%%%%%%%%%%%%%%%%%%%%%%%%%%%%%%%%%%%%%%%%%%%%%%%%%%%%%%%%%%%%%%%%%%%%%%%
% APPENDIX
%%%%%%%%%%%%%%%%%%%%%%%%%%%%%%%%%%%%%%%%%%%%%%%%%%%%%%%%%%%%%%%%%%%%%%%%%%%%%%%
%%%%%%%%%%%%%%%%%%%%%%%%%%%%%%%%%%%%%%%%%%%%%%%%%%%%%%%%%%%%%%%%%%%%%%%%%%%%%%%
\newpage
\appendix
\onecolumn

\section{Generation Strategy}
\label{appendix:generation_strategy}

We provide the formal definitions of the two constrained generation strategies used in
$Y=\mathcal{G}(\pi_{\theta},H_u,\mathcal{I})$.

\paragraph{Tokenization and item constraint.}
Let $\mathcal{V}$ be the base token vocabulary of the LLM, and let $\mathrm{eos}\in\mathcal{V}$ be the end-of-sequence token.
Each valid item $i\in\mathcal{I}$ is verbalized into a unique token sequence via a deterministic serialization function
$\tau:\mathcal{I}\to\mathcal{V}^{+}$, where $\tau(i)=(y_1,\ldots,y_{L_i})$.
We interpret the constrained item vocabulary $\mathcal{I}$ as a \emph{prefix constraint} over token sequences: a partial
hypothesis $y_{1:j}=(y_1,\ldots,y_j)$ is feasible if it is a prefix of at least one serialized item.

Formally, define the set of \emph{admissible next tokens} after a prefix $y_{<j}=(y_1,\ldots,y_{j-1})$ as
\begin{equation}
\label{eq:admissible-set}
\mathcal{A}(y_{<j};\mathcal{I})
\;=\;
\Big\{ y\in\mathcal{V} \;\Big|\; \exists\, i\in\mathcal{I}\ \text{s.t.}\ \tau(i) \text{ has prefix } (y_{<j},y) \Big\}.
\end{equation}
In practice, $\mathcal{A}(\cdot;\mathcal{I})$ can be implemented efficiently with a prefix trie built over $\{\tau(i): i\in\mathcal{I}\}$.

The mapping $\phi(\cdot)$ in the main text is then defined by exact match:
\begin{equation}
\label{eq:phi-def}
\phi(Y)=i
\quad\Longleftrightarrow\quad
\tau(i)=Y,
\end{equation}
which is well-defined when serialization is unique.

\paragraph{Masked (constrained) token distribution.}
Given the model $\pi_{\theta}$, history $H_u$, and prefix $y_{<j}$, define the constrained distribution over the next token by
masking invalid tokens and renormalizing:
\begin{equation}
\label{eq:masked-dist}
\tilde{\pi}_{\theta}(y \mid y_{<j}, H_u, \mathcal{I})
\;=\;
\frac{\pi_{\theta}(y \mid y_{<j}, H_u)\cdot \mathbf{1}\!\left[y\in \mathcal{A}(y_{<j};\mathcal{I})\right]}
{\sum\limits_{y'\in \mathcal{A}(y_{<j};\mathcal{I})} \pi_{\theta}(y' \mid y_{<j}, H_u)}.
\end{equation}
When $\mathcal{A}(y_{<j};\mathcal{I})$ is constructed from a trie over $\{\tau(i)\}$, the denominator is guaranteed to be nonzero
for any feasible prefix.

\subsection{Constrained Random Sampling}
\label{app:random-sampling}

\paragraph{Definition.}
Constrained random sampling generates a stochastic sequence by sampling each token sequentially from the masked distribution:
\begin{align}
\label{eq:crs}
y_j \;&\sim\; \tilde{\pi}_{\theta}(\cdot \mid y_{<j}, H_u, \mathcal{I}),
\qquad j=1,2,\ldots,T, \\
Y \;&=\; (y_1,\ldots,y_T),
\qquad \text{where } T=\min\{j: y_j=\mathrm{eos}\}\ \text{(or a max length cutoff)}.
\end{align}
Equivalently, the induced sampling distribution over full sequences is
\begin{equation}
\label{eq:crs-joint}
\Pr_{\mathcal{G}_{\mathrm{RS}}}(Y \mid H_u,\mathcal{I})
\;=\;
\prod_{j=1}^{|Y|}
\tilde{\pi}_{\theta}(y_j \mid y_{<j}, H_u, \mathcal{I}),
\end{equation}
and the decoded item is $i=\phi(Y)$.

\paragraph{Output.}
For single-item recommendation, $\mathcal{G}_{\mathrm{RS}}$ returns one $Y$ (and thus one item $\phi(Y)$).
If multiple candidates are needed (e.g., for negative sampling), repeat Eq.~\eqref{eq:crs} independently to obtain a multiset of samples.

\subsection{Constrained Beam Search}
\label{app:beam-search}

\paragraph{Prefix scoring.}
For any partial hypothesis $y_{1:j}$, define its cumulative log-likelihood score under $\pi_{\theta}$ as
\begin{equation}
\label{eq:prefix-score}
h_{\theta}(y_{1:j}\,;\,H_u)
\;=\;
\sum_{t=1}^{j} \log \pi_{\theta}(y_t \mid y_{<t}, H_u).
\end{equation}

\paragraph{Beam update under the constraint.}
Let $B$ be the beam width. Beam search maintains a set of active prefixes $\mathcal{B}_j$ after $j$ decoding steps.
Initialize $\mathcal{B}_0=\{()\}$ with the empty prefix and an empty set of completed hypotheses $\mathcal{F}=\emptyset$.
At step $j\ge 1$, expand each active prefix $y_{<j}\in\mathcal{B}_{j-1}$ only with admissible tokens:
\begin{equation}
\label{eq:beam-expand}
\mathcal{C}_j
=
\bigcup_{y_{<j}\in \mathcal{B}_{j-1}}
\Big\{ (y_{<j}, y)\;:\; v\in \mathcal{A}(y_{<j};\mathcal{I}) \Big\}.
\end{equation}
Select the top-$B$ candidates by score to form the next beam:
\begin{equation}
\label{eq:beam-topb}
\mathcal{B}_j
=
\operatorname{TopB}\big(\mathcal{C}_j;\ h_{\theta}(\cdot\,;\,H_u)\big),
\end{equation}
where $\operatorname{TopB}(\cdot)$ returns the $B$ highest-scoring sequences.
Any hypothesis that ends with $\mathrm{eos}$ and exactly matches a valid item serialization is moved to the completed set:
\begin{equation}
\label{eq:beam-finish}
\mathcal{F}
\leftarrow
\mathcal{F}\ \cup\ \{ y\in\mathcal{B}_j : y_{|Y|}=\mathrm{eos} \ \wedge\ \exists\, i\in\mathcal{I}\ \text{s.t.}\ \tau(i)=Y \}.
\end{equation}
Decoding stops when either (i) a maximum length $L_{\max}$ is reached, or (ii) all beams are completed.

\paragraph{Returning Top-$B$ sequences and mapped items.}
Let $\operatorname{TopB}(\mathcal{F})$ denote the top-$B$ completed sequences in $\mathcal{F}$ by $h_{\theta}(\cdot\,;\,H_u)$.
Constrained beam search returns
\begin{equation}
\label{eq:return-topk}
\{Y^{(1)},\ldots,Y^{(B)}\} = \operatorname{TopB}\big(\mathcal{F};\ h_{\theta}(\cdot\,;\,H_u)\big),
\qquad
i^{(b)}=\phi\!\left(Y^{(b)}\right).
\end{equation}

The algorithm can be found in Algorithm~\ref{alg:constrained-beam}.

\begin{algorithm}[tb]
  \caption{Constrained Beam Search $\mathcal{G}_{\mathrm{BS}}(\pi_{\theta},H_u,\mathcal{I})$}
  \label{alg:constrained-beam}
  \begin{algorithmic}
    \STATE {\bfseries Input:} model $\pi_{\theta}$, history $H_u$, item set $\mathcal{I}$ (via trie), beam width $B$, return size $B$, max length $L_{\max}$
    \STATE {\bfseries Output:} Top-$B$ decoded sequences $\{Y^{(b)}\}_{b=1}^{B}$ and mapped items $\{i^{(b)}=\phi(Y^{(b)})\}_{b=1}^{B}$
    \STATE Initialize beam $\mathcal{B}_0 \leftarrow \{()\}$ and completed set $\mathcal{F}\leftarrow\emptyset$
    \FOR{$j=1$ {\bfseries to} $L_{\max}$}
      \STATE Initialize candidate set $\mathcal{C}_j \leftarrow \emptyset$
      \FORALL{$y_{<j}\in \mathcal{B}_{j-1}$}
        \STATE Compute admissible token set $\mathcal{A}(y_{<j};\mathcal{I})$ (Eq.~\eqref{eq:admissible-set})
        \FORALL{$v \in \mathcal{A}(y_{<j};\mathcal{I})$}
          \STATE Add candidate $(y_{<j},y)$ into $\mathcal{C}_j$
        \ENDFOR
      \ENDFOR
      \STATE Set $\mathcal{B}_j \leftarrow \operatorname{TopB}(\mathcal{C}_j;\ h_{\theta}(\cdot\,;\,H_u))$ using the score in Eq.~\eqref{eq:prefix-score}
      \FORALL{$y \in \mathcal{B}_j$}
        \IF{$y_{|y|}=\mathrm{eos}$ {\bfseries and} $\exists\, i\in\mathcal{I}$ s.t. $\tau(i)=Y$}
          \STATE $\mathcal{F} \leftarrow \mathcal{F} \cup \{Y\}$
        \ENDIF
      \ENDFOR
      \IF{the stopping criterion is satisfied}
        \STATE {\bfseries break}
      \ENDIF
    \ENDFOR
    \STATE Return $\{Y^{(b)}\}_{b=1}^{B} = \operatorname{TopB}(\mathcal{F};\ h_{\theta}(\cdot\,;\,H_u))$ and $i^{(b)}=\phi(Y^{(b)})$ for $b=1,\ldots,B$
  \end{algorithmic}
\end{algorithm}

\section{Proof of Lemma~\ref{lem:grpo-to-auc}}
\label{appendix:proof_of_lemma1}

\begin{proof}
We focus on the reward-driven term in Eq.~\eqref{eq:grpo-obj} and omit the KL regularizer since it does not depend on the binary reward outcomes.
Under constrained random sampling, the $G$ rollouts $\{Y_m\}_{m=1}^G$ are i.i.d. from $\pi_{old}(\cdot\mid H_u)$, hence by symmetry the outer averaging over $m$ satisfies
\begin{equation}
\mathbb{E}_{\{Y_m\}\sim \pi_{old}(\cdot\mid H_u)}
\Big[\frac{1}{G}\sum_{m=1}^G F(Y_m)\Big]
=
\mathbb{E}_{Y\sim \pi_{old}(\cdot\mid H_u)}[F(Y)].
\end{equation}
Therefore, it suffices to analyze a generic rollout $Y\sim \pi_{old}(\cdot\mid H_u)$.

Let the binary reward for a rollout be
\begin{equation}
r(Y\mid H_u)\;\triangleq\;R_{\text{rule}}(\phi(Y), i_t)\in\{0,1\}.
\end{equation}
Define
\begin{equation}
p(H_u)\;\triangleq\;\mathbb{E}_{Y\sim \pi_{old}(\cdot\mid H_u)}[r(Y\mid H_u)]
=\Pr_{Y\sim \pi_{old}(\cdot\mid H_u)}\big[r(Y\mid H_u)=1\big],
\end{equation}
so that $\mathrm{Var}(r(Y\mid H_u))=p(H_u)\big(1-p(H_u)\big)$.
Following the standard population-normalization analysis (which is exact when the empirical mean/std are replaced by their expectations, and becomes accurate as $G$ grows), the normalized advantage can be written as
\begin{equation}
A(Y\mid H_u)
\;\triangleq\;
\frac{r(Y\mid H_u)-p(H_u)}{\sqrt{p(H_u)\big(1-p(H_u)\big)}}
=
\begin{cases}
\sqrt{\dfrac{1-p(H_u)}{p(H_u)}}, & r(Y\mid H_u)=1, \\[6pt]
-\sqrt{\dfrac{p(H_u)}{1-p(H_u)}}, & r(Y\mid H_u)=0.
\end{cases}
\label{eq:adv-pop}
\end{equation}
Moreover, since $r(Y\mid H_u)$ is item-level, the same advantage applies to every token in the sequence, i.e., $\hat{A}_{m,j}=A(Y_m\mid H_u)$.

Now consider the clipped surrogate inside Eq.~\eqref{eq:grpo-obj}. For a token ratio
\begin{equation}
x_{j}(Y,H_u)
\;\triangleq\;
\frac{\pi_\theta(y_j\mid y_{<j}, H_u)}{\pi_{\text{old}}(y_j\mid y_{<j}, H_u)},
\end{equation}
the GRPO token contribution is
\begin{equation}
\min\Big[x_j(Y,H_u)\,A(Y\mid H_u),\ \mathrm{clip}(x_j(Y,H_u),1-\epsilon,1+\epsilon)\,A(Y\mid H_u)\Big].
\end{equation}
When $A(Y\mid H_u)>0$, the above equals
\begin{equation}
A(Y\mid H_u)\cdot \min\big(x_j(Y,H_u),1+\epsilon\big)
\;=\;
A(Y\mid H_u)\cdot g^{+}\!\big(x_j(Y,H_u),1\big),
\end{equation}
and when $A(Y\mid H_u)\le 0$, it equals
\begin{equation}
-|A(Y\mid H_u)|\cdot \max\big(x_j(Y,H_u),1-\epsilon\big)
\;=\;
-|A(Y\mid H_u)|\cdot g^{-}\!\big(x_j(Y,H_u),1\big),
\end{equation}
where $g^{+}(x,1)=\min(x,1+\epsilon)$ and $g^{-}(x,1)=\max(x,1-\epsilon)$.

Putting these together, the reward-driven objective becomes
\begin{equation}
\begin{aligned}
\mathcal{J}(\theta)
&=
\mathbb{E}_{H_u}\mathbb{E}_{Y\sim \pi_{old}(\cdot\mid H_u)}
\Bigg[
\frac{1}{|Y|}\sum_{j=1}^{|Y|}
\Big(
\mathbb{I}\{r(Y\mid H_u)=1\}A(Y\mid H_u)\,g^+(x_j(Y,H_u),1) \\
&\hspace{5.6cm}
-\mathbb{I}\{r(Y\mid H_u)=0\}|A(Y\mid H_u)|\,g^-(x_j(Y,H_u),1)
\Big)
\Bigg].
\end{aligned}
\end{equation}
Conditioning on the two reward events and using \eqref{eq:adv-pop}, we obtain
\begin{equation}
\begin{aligned}
\mathcal{J}(\theta)
&=
\mathbb{E}_{H_u}\Bigg[
p(H_u)\cdot \sqrt{\frac{1-p(H_u)}{p(H_u)}}\;
\mathbb{E}_{Y^+\sim \pi_{old}^+(\cdot\mid H_u)}
\frac{1}{|Y^+|}\sum_{j=1}^{|Y^+|} g^+\!\big(x_j(Y^+,H_u),1\big) \\
&\quad\quad
-(1-p(H_u))\cdot \sqrt{\frac{p(H_u)}{1-p(H_u)}}\;
\mathbb{E}_{Y^-\sim \pi_{old}^-(\cdot\mid H_u)}
\frac{1}{|Y^-|}\sum_{j=1}^{|Y^-|} g^-\!\big(x_j(Y^-,H_u),1\big)
\Bigg] \\
&=
\mathbb{E}_{H_u}\sqrt{p(H_u)\big(1-p(H_u)\big)}
\Bigg[
\mathbb{E}_{Y^+\sim \pi_{old}^+(\cdot\mid H_u)} s_\theta^+(Y^+,H_u)
-\mathbb{E}_{Y^-\sim \pi_{old}^-(\cdot\mid H_u)} s_\theta^-(Y^-,H_u)
\Bigg] \\
&=
\mathbb{E}_{H_u}\sqrt{p(H_u)\big(1-p(H_u)\big)}\;
\mathbb{E}_{Y^+\sim \pi_{old}^+(\cdot\mid H_u),\,Y^-\sim \pi_{old}^-(\cdot\mid H_u)}
\big[s_\theta^+(Y^+,H_u)-s_\theta^-(Y^-,H_u)\big],
\end{aligned}
\end{equation}
which is exactly Eq.~\eqref{eq:pairwise-form}. This completes the proof.
\end{proof}

\section{Proof of Lemma~\ref{lem:beam-hard-neg}}
\label{appendix:proof_lemma_hard_neg}

\begin{proof}
Fix a user history $H_u$ and consider constrained decoding over the finite item vocabulary $\mathcal{I}$, where each valid item $i\in\mathcal{I}$ corresponds to a unique constrained token sequence $Y$ with $\phi(Y)=i$.
Under the policy $\pi_{\theta}(\cdot\mid H_u)$, the (policy) score of item $i$ is
\begin{equation}
f_{u,i}
\;\triangleq\;
\prod_{j=1}^{|Y(i)|}\pi_{\theta}\!\left(y_j(i)\mid y_{<j}(i), H_u\right),
\end{equation}
which is exactly the probability of generating $Y(i)$ under constrained generation.

Constrained beam search with beam size $B$ keeps, at each decoding step, the $B$ highest-probability partial hypotheses (prefixes) under $\pi_{\text{old}}(\cdot\mid H_u)$, and finally returns the Top-$K$ completed sequences in the terminal beam. Because the constrained item vocabulary is finite, the set of valid completed sequences
\begin{equation}
\mathcal{Y}(H_u)\;\triangleq\;\{Y: \phi(Y)=i, i\in\mathcal{I}\}
\end{equation}
is finite. Let $N_j$ denote the number of valid partial prefixes of length $j$ consistent with the constraint. Then for any fixed depth $j$, if $B\ge N_j$, beam search retains \emph{all} valid prefixes of length $j$ (since there are at most $N_j$ such prefixes). Consequently, in the limit $B\to\infty$, for every depth $j$ the beam contains all valid prefixes, and thus at termination the search enumerates all valid completed sequences in $\mathcal{Y}(H_u)$ and ranks them exactly by their probabilities.

Let $\mathcal{I}_u^-$ be the negative-item set for user $u$, with $n_u^- \triangleq |\mathcal{I}_u^-|$, and sort negatives by score in descending order:
\begin{equation}
f_{u,i_{(1)}^-} > f_{u,i_{(2)}^-} > \cdots > f_{u,i_{(n_u^-)}^-},
\end{equation}
In the limit $B\to\infty$, the Top-$K$ completed sequences returned by constrained beam search correspond exactly to the Top-$K$ items
\begin{equation}
\{i_{(1)}^-,\ldots,i_{(K)}^-\}
\quad \text{ranked by } f_{u,i}.
\end{equation}

Now set $\alpha+d \triangleq K/n_u^-$ and let $\eta_{\alpha+d}$ be the OPAUC threshold defined by Eq.~\eqref{eq:eta_d}, i.e.,
$\Pr_{i^-\sim \mathcal{I}_u^-}\!\big[f_{u,i^-}\ge \eta_{\alpha+d}\big]=\alpha+d$.
Under the no-tie assumption (or after tie breaking), this implies
\begin{equation}
\mathcal{Q}_u^-(\alpha+d)
=
\{i^-\in\mathcal{I}_u^-: f_{u,i^-}\ge \eta_{\alpha+d}\}
=
\{i_{(1)}^-,\ldots,i_{(K)}^-\}.
\end{equation}
Therefore, as $B\to\infty$, constrained beam search selects negatives \emph{exactly} from the top-$d$ quantile set $\mathcal{Q}_u^-(\alpha+d)$ (equivalently, those with $f_{u,i^-}\ge \eta_{\alpha+d}$). In particular, if one defines the induced negative sampler $\mathbb{P}_{\text{neg}}(\cdot\mid u)$ by uniformly drawing from the returned Top-$K$ beam outputs, then $\mathbb{P}_{\text{neg}}(\cdot\mid u)$ is precisely the uniform distribution over $\mathcal{Q}_u^-(\alpha+d)$ in the $B\to\infty$ limit. This establishes the claimed equivalence.
\end{proof}

\section{Proof of Proposition~\ref{prop:grpo-opauc}}
\label{appendix:proof_proposition_grpo_opauc}

\begin{proof}
By Lemma~\ref{lem:grpo-to-auc}, GRPO induces the pairwise ranking push
\begin{equation}
\Pr_{i^{+}\sim \mathcal{I}_u^{+},\; i^-\sim \mathbb{P}_{\text{neg}}(\cdot\mid u)}
\big[f_{u,i^{+}} > f_{u,i^-}\big],
\end{equation}
where $\mathbb{P}_{\text{neg}}(\cdot\mid u)$ is the item-level negative sampler determined by the decoding strategy.

When negatives are generated by constrained beam search, Lemma~\ref{lem:beam-hard-neg} implies that (in the hard-negative limit) the induced sampler equals the distribution over the top-$d$ quantile negative set,
\begin{equation}
\mathbb{P}_{\text{neg}}(\cdot\mid u)\;=\;\mathbb{P}_{\text{hard}}(\cdot\mid u),
\qquad
\mathrm{supp}\big(\mathbb{P}_{\text{hard}}(\cdot\mid u)\big)\subseteq \mathcal{Q}_u^-(\alpha+d),
\end{equation}
with $\mathcal{Q}_u^-(\alpha+d)=\{i^-\in\mathcal{I}_u^-: f_{u,i^-}\ge \eta_{\alpha+d}\}$ defined in Eq.~\eqref{eq:Q_u}. Substituting this into the induced pairwise objective yields
\begin{equation}
\Pr_{i^+\sim \mathcal{I}_u^+,\,i^-\sim \mathbb{P}_{\text{hard}}(\cdot\mid u)}
\big[f_{u,i^+} > f_{u,i^-}\big].
\end{equation}
By definition of $\eta_{\alpha+d}$ in Eq.~\eqref{eq:eta_d}, sampling $j\sim \mathbb{P}_{\text{hard}}(\cdot\mid u)$ is equivalent to sampling negatives conditional on belonging to the upper-tail event $\{f_{u,i^-}\ge \eta_{\alpha+d}\}$, i.e., restricting comparisons to $\mathcal{Q}_u^-(\alpha+d)$. Therefore the above pairwise probability is exactly the one-way partial AUC at level $d$, namely $\mathrm{OPAUC}(\alpha+d)$. This proves the claim.
\end{proof}

\section{Proof of Theorem~\ref{thm:wpauc-topk}}
\label{app:wpauc-topk}

\begin{proof}
Let $n\coloneqq n^+$ and let $i$ be the number of positives in the Top-$K$ prefix, so $\mathrm{Recall@K}=i/n$.
Set
\begin{equation}
\alpha=\frac{K-n}{n^-},\qquad d=\frac{n}{n^-},
\end{equation}
then (up to the ceiling operators) $\mathcal{W}^-(\alpha,d)$ contains exactly the negatives with ranks
\begin{equation}
K-n < r \le K,
\end{equation}
hence $|\mathcal{W}^-(\alpha,d)|=n$.
Moreover, the top $K-n$ negatives $\{i^-_{(r)}\}_{r\le K-n}$ \emph{always} appear before all window negatives in score order, so the Top-$K$ event is fully determined by the relative order of the $2n$ items in
$\mathcal{I}^+\cup \mathcal{W}^-(\alpha,d)$:
among these $2n$ items, exactly $i$ positives must fall in the top $n$ positions.

Under this condition, $\mathrm{WPAUC}(\alpha,d)$ is simply the fraction of correctly ordered $(+,-)$ pairs within this
$+\!/-$ sequence of length $2n$ (where $-$ denotes a window negative). We now bound it by extremal arrangements.

\paragraph{Maximum.}
The maximum $\mathrm{WPAUC}(\alpha,d)$ given $i$ is attained when the first half contains $i$ leading positives:
\[
\underbrace{+\cdots+}_{i}\ \underbrace{-\cdots-}_{n-i}\ \Big|\ 
\underbrace{+\cdots+}_{n-i}\ \underbrace{-\cdots-}_{i}.
\]
Then the number of concordant pairs is $in+(n-i)i=i(2n-i)$, hence
\begin{equation}
\mathrm{WPAUC}(\alpha,d)\ \le\ \frac{i(2n-i)}{n^2}.
\end{equation}

\paragraph{Minimum.}
The minimum is attained when the first half contains $i$ trailing positives:
\[
\underbrace{-\cdots-}_{n-i}\ \underbrace{+\cdots+}_{i}\ \Big|\ 
\underbrace{-\cdots-}_{i}\ \underbrace{+\cdots+}_{n-i},
\]
yielding exactly $i^2$ concordant pairs, hence
\begin{equation}
\mathrm{WPAUC}(\alpha,d)\ \ge\ \frac{i^2}{n^2}.
\end{equation}

Let $w\coloneqq \mathrm{WPAUC}(\alpha,d)$. Since $i$ is integer, inverting the above gives
\begin{equation}
\Big\lceil n\big(1-\sqrt{1-w}\big)\Big\rceil \ \le\ i \ \le\ \Big\lfloor n\sqrt{w}\Big\rfloor,
\end{equation}
and dividing by $n$ yields \eqref{eq:wpauc-recall-bound}.

\paragraph{Tightness vs.\ $\mathrm{OPAUC}(\alpha{+}d)$.}
Note that $\alpha{+}d=\frac{K}{n^-}$, hence $\mathrm{OPAUC}(\alpha{+}d)$ uniformly samples negatives from the top-$K$ negative
prefix $\{i^-_{(r)}:1\le r\le K\}$.
With $\alpha,d$ in \eqref{eq:beta-window}, split this prefix into two disjoint blocks
\[
\mathcal{H}^- \coloneqq \{i^-_{(r)}:1\le r\le K-n\},\qquad
\mathcal{W}^- \coloneqq \{i^-_{(r)}:K-n< r\le K\}=\mathcal{W}^-(\alpha,d),
\]
where $|\mathcal{H}^-|=K-n$ and $|\mathcal{W}^-|=n$.
Define
\[
A \coloneqq \Pr_{i^+\sim\mathcal{I}^+,\,j\sim\mathcal{H}^-}\!\big[f(i^+)>f(j)\big],\qquad
B \coloneqq \Pr_{i^+\sim\mathcal{I}^+,\,j\sim\mathcal{W}^-}\!\big[f(i^+)>f(j)\big]
= \mathrm{WPAUC}(\alpha,d).
\]
By uniform mixing over $\mathcal{H}^-\cup\mathcal{W}^-$, we have the exact decomposition
\begin{equation}\label{eq:opauc-mix}
\mathrm{OPAUC}(\alpha{+}d)
\;=\;
\frac{K-n}{K}\,A \;+\; \frac{n}{K}\,B.
\end{equation}
Moreover, since every hard negative has score no smaller than every window negative
($f(i^-_{(r)})\ge f(i^-_{(s)})$ for $r\le K-n < s\le K$), for any fixed $i^+$,
\(
\mathbb{1}\{f(i^+)>j\}=1 \text{ for } j\in\mathcal{H}^-
\Rightarrow
\mathbb{1}\{f(i^+)>j'\}=1 \text{ for } j'\in\mathcal{W}^-,
\)
and averaging yields the monotone constraint
\begin{equation}\label{eq:A-le-B}
0\le A \le B \le 1.
\end{equation}

Let $o\coloneqq \mathrm{OPAUC}(\alpha{+}d)\in(0,1)$.
From \eqref{eq:opauc-mix}--\eqref{eq:A-le-B},
\[
o \le \frac{K-n}{K}B+\frac{n}{K}B = B
\quad\Rightarrow\quad B\ge o,
\qquad
o \ge \frac{n}{K}B
\quad\Rightarrow\quad B\le \min\!\Big\{1,\frac{K}{n}o\Big\}.
\]
Hence $\mathrm{OPAUC}(\alpha{+}d)$ only localizes the window statistic $B$ to a \emph{non-degenerate} interval
\begin{equation}\label{eq:B-interval-from-o}
B \in \Big[o,\ \min\!\big\{1,\tfrac{K}{n}o\big\}\Big],
\end{equation}
whose length is strictly positive for $o\in(0,1)$ since $K>n$.

Now denote the WPAUC-induced Recall bounds in \eqref{eq:wpauc-recall-bound} by
\[
\underline R(B)\ \le\ \mathrm{Recall@K}\ \le\ \overline R(B),
\]
where $\underline R(\cdot)$ and $\overline R(\cdot)$ are monotone nondecreasing in $B$ (as proved above).
If we only observe $o=\mathrm{OPAUC}(\alpha{+}d)$, then $B$ may take any value in \eqref{eq:B-interval-from-o}, so the set of
$\mathrm{Recall@K}$ consistent with $o$ must contain
\[
\bigcup_{B\in [\,o,\ \min\{1,\frac{K}{n}o\}\,]}\ [\underline R(B),\overline R(B)]
\;=\;
\big[\underline R(o),\ \overline R(\min\{1,\tfrac{K}{n}o\})\big],
\]
which is a strictly larger interval than $[\underline R(B),\overline R(B)]$ for any particular $B$ (including the true
$B=\mathrm{WPAUC}(\alpha,d)$), because \eqref{eq:B-interval-from-o} has positive length and $\underline R,\overline R$
are non-constant on $(0,1)$.
Therefore, for the same $K$, the interval induced by \eqref{eq:wpauc-recall-bound} (conditioning on $B$) is strictly tighter
than any interval obtainable from $\mathrm{OPAUC}(\alpha{+}d)$ alone (which conditions only on $o$ and must account for the
uncertainty in $B$).

\end{proof}

\subsection{Proof of Lemma~\ref{lem:wpauc-eq-recall}}
\begin{proof}
Let the unique positive item be $i^+$.
By definition, $\mathrm{Recall@K}= \mathbb{I}\{\mathrm{rank}(i^+)\le K\}$.

The windowed negative set $\mathcal{W}_u^-(\alpha,d)$ in \eqref{eq:window-neg}
contains exactly one negative item:
the $(K\!-\!1)$-th highest-scored negative, denoted $j_{(K-1)}$.
Therefore, by \eqref{eq:wpauc},
\[
\mathrm{WPAUC}_u(\alpha,d)
=
\mathbb{I}\{ f_{u,i^+} > f_{u,j_{(K-1)}}\}.
\]
The inequality $f_{u,i^+} > f_{u,j_{(K-1)}}$ holds iff the positive item is ranked above the $(K\!-\!1)$-th negative,
which is equivalent to $i^+$ appearing within the overall top-$K$ positions.
Hence $\mathrm{WPAUC}_u(\alpha,d)=\mathbb{I}\{\mathrm{rank}(i^+)\le K\}=\mathrm{Recall@K}$.
\end{proof}

\begin{algorithm}[t]
\caption{TAWin: Threshold-Adjusted Soft Window Reweighting}
\label{alg:tawin}
\begin{algorithmic}[1]
\STATE \textbf{Input:} flattened token-level advantages $\bm{a}\in\mathbb{R}^{N}$, group size $G$,
Top-$K$ mass $K$, temperature $\tau>0$, anchor $\tilde \sigma_\star\in[0,1]$.
\STATE \textbf{Output:} reweighted advantages $\bm{a}'\in\mathbb{R}^{N}$.
\STATE Set $\bm{a}'\leftarrow \bm{a}$ and $M \leftarrow N/G$ \hfill\COMMENT{$M$ groups}
\FOR{$g=1$ \textbf{to} $M$}
    \STATE $\mathcal{I}\leftarrow\{(g-1)G+1,\ldots,gG\}$; \quad $\bm{a}_g \leftarrow \bm{a}'[\mathcal{I}]$
    \STATE $\mathcal{N}\leftarrow\{i\in\mathcal{I}: a_{g,i}<0\}$; \quad $\mathcal{P}\leftarrow\{i\in\mathcal{I}: a_{g,i}\ge 0\}$
    \IF{$|\mathcal{N}|=0$} \STATE \textbf{continue} \ENDIF
    \STATE $K \leftarrow |\mathcal{N}|$
    \STATE \textbf{Rank negatives:} sort $\mathcal{N}$ by model score $f$ in descending order and assign ranks $\sigma\in\{1,\ldots,K\}$ \hfill\COMMENT{higher score $\Rightarrow$ smaller rank}
    \STATE \textbf{Normalize ranks:} $\tilde \sigma(\sigma)\leftarrow \frac{\sigma-1}{K-1}$; if $K=1$, define $\tilde \sigma(\sigma)\leftarrow 0$
    \STATE \textbf{TAWin logits:} for each negative with rank $r$, set
    \[
        x(\sigma)\leftarrow -\big|\tilde \sigma(\sigma)-\tilde \sigma(\sigma_\star)\big|
    \]
    \STATE \textbf{Sort logits:} let $x_{(1)}\ge \cdots \ge x_{(K)}$ denote logits in non-increasing order
    \STATE \textbf{Compute threshold $\lambda$:} find an index $m\in\{0,\ldots,K-1\}$ such that $x_{(m)}\ge \lambda \ge x_{(m+1)}$ (with $x_{(0)}=+\infty$), and set
    \[
        \lambda \leftarrow \tau\log\!\Big(\sum_{i=m+1}^{K}\exp\!\big(\tfrac{x_{(i)}}{\tau}\big)\Big)-\tau\log(K-m)
    \]
    \STATE \textbf{ThreTopK weights (sorted order):} for $i=1,\ldots,K$,
    \[
        w_{(i)} \leftarrow \min\!\left\{1,\exp\!\big(\tfrac{x_{(i)}-\lambda}{\tau}\big)\right\}
    \]
    \STATE \textbf{Restore original order:} map $\{w_{(i)}\}$ back to indices $i\in\mathcal{N}$ to obtain $\{w_i\}_{i\in\mathcal{N}}$
    \STATE \textbf{Mass normalization:} for all $i\in\mathcal{N}$, set
    \[
        \bar w_i \leftarrow \frac{w_i}{\sum_{j\in\mathcal{N}} w_j}\cdot |\mathcal{N}|
    \]
    \STATE \textbf{Reweight negative advantages:} $a'_{g,i}\leftarrow a'_{g,i}\cdot \bar w_i \quad \forall i\in\mathcal{N}$
\ENDFOR
\STATE \textbf{return} $\bm{a}'$
\end{algorithmic}
\end{algorithm}

\section{TAWin Algorithm}\label{appendix:algorigthm}

The detailed algorithm of TAWin is provided in~\ref{alg:tawin}.

\section{Experimental Settings}

\subsection{Dataset}\label{appendix:dataset}
We construct our datasets from the Amazon Review corpus following the preprocessing pipeline in~\citet{bao2024decoding, DBLP:journals/corr/abs-2510-12211}. First, we restrict interactions to a fixed temporal window defined per category: from October 2016 to November 2018 for Toys and Games and Office Products, and from October 1996 to November 2018 for Industrial and Scientific. For Yelp, we select the interactions from the year of 2021. We then apply iterative K-core filtering with $K=5$, removing users and items with fewer than five interactions until convergence. To ensure high-quality textual representations, we further discard items with missing or noisy metadata and retain only those with valid and concise titles.

For each remaining user, interactions are sorted chronologically and transformed into supervised samples using a sliding window, where the historical context is truncated to at most ten preceding items and the subsequent item is treated as the prediction target. This procedure yields multiple training instances per user while preserving temporal order. Finally, all samples are globally sorted by timestamp and split chronologically into training, validation, and test sets using an 8:1:1 ratio. Dataset statistics for the resulting training splits are reported in Table~\ref{tab:dataset_statistics}.

\begin{table}[t]
\centering
\begin{tabular}{lcccc}
\toprule
\textbf{Datasets} & \textbf{Toys} & \textbf{Industrial} & \textbf{Office} & \textbf{Yelp} \\
\midrule
Items & 11,252 & 3,685 & 3,459 & 8,785 \\
Train & 112,754 & 36,259 & 38,924 & 77,097 \\
Valid & 14,095 & 4,532 & 4,866 & 9,637 \\
Test  & 14,095 & 4,533 & 4,866 & 9,638 \\
\bottomrule
\end{tabular}
\caption{Statistics of datasets.}
\label{tab:dataset_statistics}
\end{table}

\subsection{Evaluation Metrics.}\label{appendix:metrics}
To assess Top-$K$ recommendation performance, we adopt \textbf{Recall (R@K)} and
\textbf{Normalized Discounted Cumulative Gain (N@K)}.
For each user $u$, we apply the constrained beam search generation strategy $\mathcal{G}$ to obtain a ranked list of
$K$ recommended items:
\begin{equation}
\mathcal{R}_u^K
~\triangleq~
\bigl\{ \phi(Y_1), \ldots, \phi(Y_K) \bigr\},
\end{equation}
where $\{Y_1,\ldots,Y_K\}$ are the Top-$K$ completed sequences returned by
$\mathcal{G}(\pi_\theta, H_u, \mathcal{I})$, ordered by model likelihood.

\paragraph{Recall (R@K).}
R@K measures whether the ground-truth target item $i_t$ appears in the Top-$K$ recommendation list:
\begin{equation}
\mathrm{R@}K
~\triangleq~
\frac{1}{|\mathcal{U}|}
\sum_{u\in\mathcal{U}}
\mathbb{I}\!\left[\, i_t \in \mathcal{R}_u^K \,\right],
\end{equation}
where $\mathbb{I}[\cdot]$ denotes the indicator function.

\paragraph{Normalized Discounted Cumulative Gain (N@K).}
N@K further accounts for the rank position of the target item within the Top-$K$ list.
Let $\mathrm{rank}_u(i_t)$ denote the rank of $i_t$ in $\mathcal{R}_u^K$ (smaller is better). Then
\begin{equation}
\mathrm{N@}K
~\triangleq~
\frac{1}{|\mathcal{U}|}
\sum_{u\in\mathcal{U}}
\frac{\mathbb{I}\!\left[\, i_t \in \mathcal{R}_u^K \,\right]}
{\log_2\!\bigl(\mathrm{rank}_u(i_t)+1\bigr)}.
\end{equation}

\subsection{Baselines}\label{appendix:Baselines}

We compare TAWin with representative baselines from three categories:
traditional sequential recommendation models, generative recommendation models,
and LLM-based recommendation models. The baselines are described as follows:

\begin{itemize}

  \item \textbf{GRU4Rec}~\citep{hidasi2015session} models session-based recommendation by applying recurrent neural networks to entire interaction sequences, enabling temporal dependency modeling with a ranking-oriented loss.

  \item \textbf{Caser}~\citep{tang2018personalized} embeds recent interaction sequences into a two-dimensional representation and applies horizontal and vertical convolutional filters to capture both general preferences and local sequential patterns.

  \item \textbf{SASRec}~\citep{kang2018self} employs a self-attention mechanism to adaptively select relevant items from a user’s interaction history, balancing long-term dependency modeling and efficient sequential prediction.

  \item \textbf{TIGER}~\citep{DBLP:conf/nips/RajputMSKVHHT0S23} formulates recommendation as a generative retrieval task by predicting semantically structured item identifiers using a Transformer-based sequence-to-sequence model.

  \item \textbf{LC-Rec}~\citep{zheng2024adapting} bridges language and collaborative semantics by learning quantized item indices and aligning LLMs to recommendation tasks through specialized fine-tuning objectives.

  \item \textbf{MiniOneRec}~\citep{kong2025minionerec} presents an open-source generative recommendation framework that constructs Semantic IDs and applies supervised and reinforcement learning to study scaling behaviors of LLM-based recommenders.

  \item \textbf{BigRec}~\citep{DBLP:journals/tors/BaoZWZYLCFT25} proposes a bi-step grounding paradigm that first aligns LLM outputs to recommendation tokens and then maps them to valid items to achieve full-corpus ranking capability.

  \item \textbf{D$^3$}~\citep{bao2024decoding} introduces a debiasing and diversifying decoding strategy that removes length normalization effects and incorporates auxiliary models to improve accuracy and recommendation diversity.

  \item \textbf{S-DPO}~\citep{DBLP:conf/nips/ChenTZ0SZWC24} extends Direct Preference Optimization to recommendation by incorporating multiple negatives and optimizing a ranking-aware objective derived from partial Plackett--Luce models.

  \item \textbf{ReRe}~\citep{DBLP:journals/corr/abs-2510-12211} optimizes LLM-based recommenders via reinforcement learning with verifiable rewards, combining constrained generation and auxiliary ranking supervision to improve hard-negative modeling.

\end{itemize}

\subsection{Implement Details}\label{appendix:implement_details}

Traditional recommendation models are trained with binary cross-entropy loss and the Adam optimizer.
The learning rate is tuned over $\{10^{-2},10^{-3},10^{-4}\}$ and the weight decay is selected from
$\{10^{-2},10^{-3},10^{-4},10^{-5},10^{-6}\}$; the batch size is fixed to 1024.
For TIGER, we adopt T5~\citep{DBLP:journals/corr/abs-1910-10683} as the backbone.

For LLM-based recommenders, we use Qwen2.5-0.5B~\citep{qwen2025qwen25technicalreport} as the base model to reduce computational overhead,
and optimize all models with AdamW~\citep{DBLP:journals/corr/abs-1711-05101}.
During fine-tuning, SFT and preference-alignment data are processed with batch size 128, while reinforcement learning
uses batch size 512.
We apply a learning rate of $3\times 10^{-4}$ for SFT and $1\times 10^{-5}$ for S-DPO, SPRec, and ReRe,
all with a cosine learning-rate scheduler.
SFT runs for up to 10 epochs with early stopping (patience 1).
DPO-based models (S-DPO, SPRec) are trained for 1 epoch, while ReRe is trained for 2 epochs.

For D$^{3}$, the interpolation coefficient $\alpha$ is searched over $\{0.8,0.9,1.0\}$.
For DPO-based methods, we set the KL coefficient to $\beta=0.1$, and S-DPO samples 3 negative items per instance.
For ReRe, we set $\beta=10^{-3}$ and generate 16 candidate items per prompt.
When implementing GRPO, we estimate the KL divergence via a second-order approximation.

\section{Reweighting Ablation}\label{appendix:selection_ablaton}

\begin{table*}[t]
\centering
\caption{Performance comparison between ReRe with ranking reward ($\text{ReRe}^*$) and TAWin. The best performance is
highlighted in boldface.}
\label{tab:selection_ablation}
\resizebox{\textwidth}{!}{
\begin{tabular}{l ccc ccc ccc ccc}
\toprule
\multirow{2}{*}{Method} & 
\multicolumn{3}{c}{Toys} & 
\multicolumn{3}{c}{Industrial} & 
\multicolumn{3}{c}{Office} & 
\multicolumn{3}{c}{Yelp} \\
\cmidrule(lr){2-4}
\cmidrule(lr){5-7}
\cmidrule(lr){8-10}
\cmidrule(lr){11-13}

& R@1 & R@3 & N@3
& R@1 & R@3 & N@3
& R@1 & R@3 & N@3
& R@1 & R@3 & N@3 \\
\midrule

$\text{ReRe}^*$
& 0.0442 & 0.0740 & 0.0615
& 0.0882 & 0.1224 & 0.1091
& 0.0931 & 0.1333 & 0.1170
&  0.0206 & 0.0358 & 0.0295
 \\

TAWin
& \textbf{0.0471} & \textbf{0.0761} & \textbf{0.0639} 
& \textbf{0.0904} & \textbf{0.1237} & \textbf{0.1099}
& \textbf{0.0961} & \textbf{0.1341} & \textbf{0.1187}
&  \textbf{0.0227} & \textbf{0.0370} & \textbf{0.0301}
 \\

\bottomrule
\end{tabular}
}

\end{table*}

\paragraph{ReRe with ranking reward as a special case.}
In addition to the rule-based reward used by ReRe, we consider its \emph{ranking-reward} variant and denote it as $\text{ReRe}^*$.
We show that $\text{ReRe}^*$ can be interpreted as a \emph{special reweighting scheme} over beam-search negatives: rather than changing the underlying GRPO-style update, the ranking reward induces deterministic, rank-dependent weights on sampled candidates, thereby scaling each candidate’s policy-gradient contribution according to its relative position within the beam set.
Formally, the induced objective of $\text{ReRe}^*$ is equivalent to optimizing a reweighted surrogate where the weight is a fixed function of the candidate rank (or normalized rank) under the current policy.
The precise equivalence, including the explicit mapping from ranking rewards to the corresponding weight function, is provided in Proposition~\ref{prop:rank-reweight-icml}.

\begin{proposition}[Ranking Reward as Sequence-Level Reweighting (One Positive per Group)]
\label{prop:rank-reweight-icml}
Fix a prompt $H_u$ and draw a group of $G$ rollouts $\{Y_k\}_{k=1}^{G}$ with item titles $e_k=\phi(Y_k)$.
Assume the group contains \emph{exactly one} positive title: there exists a unique index $t$ such that $e_t=e_\star$ (the target title), and $e_k\neq e_\star$ for all $k\neq t$.
Let the rule-based reward be $R^{(0)}_k \triangleq \mathbb{I}[e_k=e_\star]$, and define the ranking-shaped reward
\begin{equation}
R_k \;\triangleq\; R^{(0)}_k + R_{\rm rank}(e_k,e_\star),
\qquad\text{with}\qquad
R_{\rm rank}(e_\star,e_\star)=0.
\end{equation}
Let $\bar R \triangleq \frac{1}{G}\sum_{i=1}^{G} R_i$ and $\hat A_k \triangleq R_k-\bar R$ be the group-mean baseline and advantages.
Then there exist sequence-level weights $\{\omega_k\}_{k=1}^{G}$ such that
\begin{equation}
\hat A_k \;=\; \omega_k\,\hat A^{(0)}_k,\qquad \forall k,
\end{equation}
where $\hat A^{(0)}_k \triangleq R^{(0)}_k - \frac{1}{G}$ are the advantages under the rule-only reward, and
\begin{equation}
\omega_t \;=\; \frac{1-\bar R}{1-\frac{1}{G}},
\qquad
\omega_k \;=\; \frac{R_{\rm rank}(e_k,e_\star)-\bar R}{-\frac{1}{G}},\;\; k\neq t.
\end{equation}
Consequently, in GRPO-style clipped objectives (ignoring the weak KL term), adding $R_{\rm rank}$ is exactly a
sequence-level reweighting of the rule-only objective.
\end{proposition}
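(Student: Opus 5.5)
The plan is a direct algebraic verification using only the structure of the group-mean baseline. Since there is exactly one positive index $t$, the rule-only rewards are $R^{(0)}_t=1$ and $R^{(0)}_k=0$ for $k\neq t$. Their group mean is therefore $1/G$. This gives $\hat A^{(0)}_t = 1-\frac{1}{G}$ and $\hat A^{(0)}_k=-\frac{1}{G}$ for $k\neq t$. In particular every rule-only advantage is nonzero whenever $G\ge 2$, so dividing by it is legitimate. I will state $G\ge 2$ as the standing assumption, as is implicit in forming a group.

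Next I compute the shaped advantages. Because $R_{\rm rank}(e_\star,e_\star)=0$, we have $R_t=1$ and hence $\hat A_t = 1-\bar R$. For $k\neq t$, $R_k = R_{\rm rank}(e_k,e_\star)$, so $\hat A_k = R_{\rm rank}(e_k,e_\star)-\bar R$. Defining $\omega_k \triangleq \hat A_k/\hat A^{(0)}_k$ then reproduces exactly the displayed formulas for $\omega_t$ and $\omega_k$, and $\hat A_k=\omega_k\hat A^{(0)}_k$ holds by construction. This is the whole content of the first claim.

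For the consequence, I substitute $\hat A_{k,j}=\hat A_k=\omega_k\hat A^{(0)}_k$ into the clipped GRPO surrogate in Eq.~\eqref{eq:grpo-obj}. The main obstacle is that $\min(\rho A,\mathrm{clip}(\rho)A)$ is not linear in $A$, so a scalar weight cannot be pulled out blindly. If $\omega_k\ge 0$, positive homogeneity gives
\[
\min\bigl(\rho\,\omega_k\hat A^{(0)}_k,\ \mathrm{clip}(\rho,1-\epsilon,1+\epsilon)\,\omega_k\hat A^{(0)}_k\bigr)=\omega_k\min\bigl(\rho\,\hat A^{(0)}_k,\ \mathrm{clip}(\rho,1-\epsilon,1+\epsilon)\,\hat A^{(0)}_k\bigr),
\]
and the shaped objective becomes the rule-only objective with sequence-level weights $\omega_k$, which matches the form of Eq.~\eqref{eq:tawin-grpo-obj}.

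The sign of $\omega_k$ therefore needs checking. For the positive, $\omega_t\ge 0$ iff $\bar R\le 1$. For negatives, $\omega_k\ge 0$ iff $R_{\rm rank}(e_k,e_\star)\le\bar R$. I would address this in one of two ways:
\begin{itemize}
\item impose the natural condition that ranking rewards are nonpositive penalties bounded so that these inequalities hold (as in ReRe, where negatives receive nonpositive rank penalties); or
\item note that when $\omega_k<0$ the identity still holds with the clipping interpretation flipped. In that case I would restate the result as ``reweighting up to the sign-dependent choice of $\min$ versus $\max$,'' mirroring the $s^\pm_\theta$ decomposition in Lemma~\ref{lem:grpo-to-auc}.
\end{itemize}
If the paper uses std normalization, the extra common factor $1/\mathrm{std}$ is positive and per-group. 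It is absorbed into all $\omega_k$ and does not change the argument.
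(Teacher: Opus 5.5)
Your proposal is correct, and for the identity $\hat A_k=\omega_k\hat A^{(0)}_k$ it is the paper's proof step for step. The rule-only advantages are $1-\frac{1}{G}$ and $-\frac{1}{G}$, the shaped ones are $1-\bar R$ and $R_{\rm rank}(e_k,e_\star)-\bar R$, and $\omega_k$ is defined as their ratio. You also make explicit the assumption $G\ge 2$, which the paper uses silently when it asserts $\hat A^{(0)}_t\neq 0$.

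The two arguments differ only in the ``consequently'' step. The paper justifies it by saying the clipped surrogate is linear in the advantage, which is not true in general. The term $\min(\rho A,\mathrm{clip}(\rho,1-\epsilon,1+\epsilon)A)$ equals $A\min(\rho,1+\epsilon)$ for $A>0$ and $A\max(\rho,1-\epsilon)$ for $A<0$, so it is only positively homogeneous, as you say. Pulling $\omega_k$ outside the $\min$, to get the form of Eq.~\eqref{eq:tawin-grpo-obj}, therefore needs $\omega_k\ge 0$, i.e.\ $R_{\rm rank}(e_k,e_\star)\le\bar R$ for every negative. If this fails, that negative gets a positive shaped advantage, and the two clipped terms disagree whenever $\rho\notin[1-\epsilon,1+\epsilon]$. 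They still coincide at $\theta=\theta_{\rm old}$, so first-step gradients match.

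For your first option, a concrete sufficient condition is $R_{\rm rank}\le 0$ on negatives together with $\sum_{k\neq t}|R_{\rm rank}(e_k,e_\star)|\le 1$. This gives $\bar R\ge 0\ge R_{\rm rank}(e_k,e_\star)$ and $\omega_t\ge 1$.

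So your route yields a correct, explicitly conditioned version of the final claim. The paper's unconditional phrasing holds only under the weaker reading of ``reweighting'' as scaling the advantage before it enters the surrogate. That is how Algorithm~\ref{alg:tawin} applies its weights, and under that reading the consequence follows directly from $\hat A_k=\omega_k\hat A^{(0)}_k$ with no sign condition.
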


\begin{proof}
Under the one-positive assumption, the rule-only rewards satisfy $R^{(0)}_t=1$ and $R^{(0)}_{k}=0$ for $k\neq t$.
Hence the group mean is $\bar R^{(0)}=\frac{1}{G}$ and the corresponding advantages are
\begin{equation}
\hat A^{(0)}_t \;=\; 1-\frac{1}{G},
\qquad
\hat A^{(0)}_{k} \;=\; -\frac{1}{G}\;\; (k\neq t).
\label{eq:ruleonly-adv}
\end{equation}
For the shaped reward $R_k=R^{(0)}_k+R_{\rm rank}(e_k,e_\star)$ with $R_{\rm rank}(e_\star,e_\star)=0$, we have
$R_t=1$ and $R_k=R_{\rm rank}(e_k,e_\star)$ for $k\neq t$, so the group advantages are
\begin{equation}
\hat A_t \;=\; 1-\bar R,
\qquad
\hat A_k \;=\; R_{\rm rank}(e_k,e_\star)-\bar R\;\; (k\neq t).
\label{eq:shaped-adv}
\end{equation}
Define $\omega_t \triangleq \hat A_t/\hat A^{(0)}_t$ and $\omega_k \triangleq \hat A_k/\hat A^{(0)}_k$ for $k\neq t$.
Since \eqref{eq:ruleonly-adv} ensures $\hat A^{(0)}_k\neq 0$ for all $k$, we obtain
\begin{equation}
\hat A_k=\omega_k \hat A^{(0)}_k,\;\;\forall k,
\end{equation}
with the explicit forms in the statement.
Finally, GRPO clipped surrogates are linear in the (sequence-level) advantage: each rollout $Y_k$ contributes a
token-average term multiplied by $\hat A_k$. Replacing $\hat A_k$ with $\omega_k \hat A^{(0)}_k$ is therefore
exactly a sequence-level reweighting of the rule-only objective, concluding the proof.
\end{proof}

\paragraph{$\text{ReRe}^*$ vs. TAWin.}
While $\text{ReRe}^*$ offers an intuitive heuristic that favors hard negatives identified by beam search, its weighting rule is not derived from an explicit Top-$K$ alignment objective.
In contrast, TAWin is directly motivated by our theoretical insight that Top-$K$ optimization can be better approached via windowed partial-AUC shaping, and it implements a differentiable, temperature-controlled reweighting that targets the desired low-FPR region.
As reported in Table~\ref{tab:selection_ablation}, TAWin consistently outperforms $\text{ReRe}^*$ across four datasets (Toys, Industrial, Office, and Yelp) in terms of both Recall and NDCG at small cutoffs, demonstrating that TAWin yields more effective preference optimization than the rank-based heuristic used in $\text{ReRe}^*$.

\end{document}